\pgfplotsset{compat=1.14}
\newcommand{\M}{\mathcal{M}}
\newcommand{\E}[1]{\mathbb{E}\left[#1\right]}
\newcommand{\ind}[1]{\mathbb{I}_{#1}}
\DeclareMathOperator*{\argmax}{arg\,max}
\newcommand{\Bid}[1]{\mathcal B_{#1}}
\newcommand{\bid}[2]{b_{#1\ifx&#2&\else,\fi#2}}
\newcommand{\payment}[1]{p_{#1}}
\newcommand{\critprice}[2]{\pi_{#1\ifx&#2&\else,\fi#2}}
\newcommand{\crititem}[1]{j(#1)}
\newcommand{\sw}{\textsc{SW}}
\newcommand{\opt}{\textsc{OPT}}
\newcommand{\ptgreedy}{\textsc{HonestPerturbedGreedy}\xspace}
\newcommand{\tgreedy}{\textsc{HonestGreedy}\xspace}
\newcommand{\pgreedy}{\textsc{Perturbed-Greedy}\xspace}
\newcommand{\greedy}{\textsc{Greedy}\xspace}
\newcommand{\ranking}{\textsc{Ranking}\xspace}
\newcommand{\expexp}{\textsc{Explore-Exploit Mechanism}\xspace}
\begin{document}
\title{Truthful Matching with Online Items and Offline Agents}
\titlerunning{Truthful Matching with Online Items and Offline Agents}
%
\author{Michal Feldman\inst{1,2}\and
Federico Fusco\inst{3}\and
Stefano Leonardi\inst{3}\and
Simon Mauras\inst{1}\and
Rebecca Reiffenh\"auser\inst{3}}
\authorrunning{M. Feldman, F. Fusco, S. Leonardi, S. Mauras and R. Reiffenh\"auser}
%
\institute{Blavatnik School of Computer Science, Tel Aviv University, Israel \\
\email{\{mfeldman,smauras\}@tauex.tau.ac.il}
\and
    Microsoft Research Israel
\and
Department of Computer, Control
and Management Engineering ``Antonio Ruberti'', \\
Sapienza University of Rome, Italy \\
\email{\{fuscof,leonardi,rebeccar\}@diag.uniroma1.it}}
\maketitle              
\begin{abstract}
    We study truthful mechanisms for welfare maximization in online bipartite matching. 
    In our (multi-parameter) setting, every buyer is associated with a (possibly private) desired set of items, and has a private value for being assigned an item in her desired set. 
    Unlike most online matching settings, where agents arrive online, in our setting the items arrive online in an adversarial order while the buyers are present for the entire duration of the process. 
    This poses a significant challenge to the design of truthful mechanisms, due to the ability of buyers to strategize over future rounds. 
    We provide an almost full picture of the competitive ratios in different scenarios, including myopic vs. non-myopic agents, tardy vs. prompt payments, and private vs. public desired sets.
    Among other results, we identify the frontier for which the celebrated $e/(e-1)$ competitive ratio for the vertex-weighted online matching of Karp, Vazirani and Vazirani extends to truthful agents and online items.
\end{abstract}
\section{Introduction}
Matching in bipartite graphs is a fundamental model that has found numerous applications with the growth of the Internet. 
Some examples include items and buyers in e-commerce, drivers and passengers in ride-sharing platforms, ad slots and advertisers in online ad auctions, and jobs and workers in online labor markets.
In these applications, it is common that vertices on one side are known from the outset, while vertices from the other side arrive one-by-one in an online fashion. 
Upon the arrival of an online vertex, its information is revealed (containing, e.g., its set of adjacent edges, and their weights), and the algorithm has to immediately and irrevocably decide either to match it with an available offline partner or leave it unmatched forever. The goal is to maximize the sum of the weights along the matched edges.

A celebrated result in online matching by Karp, Vazirani, and Vazirani \cite{karp1990optimal} shows that in the unweighted setting, a simple randomized strategy, called \ranking, achieves a competitive ratio of $e/(e-1)$, and this is optimal. This result extends to the setting where the vertices on the offline side are weighted and the objective is to maximize the sum of the weights of the matched vertices. Although the original algorithm for this problem, \pgreedy \cite{AggarwalGKM11}, was designed for non-strategic settings, online matching problems have also been studied in the presence of {\em strategic} agents \citep[e.g.,][]{KrystaV12,reiffenhauser2019optimal,DuttingFLLR21}. This is not a mere theoretical exercise: online matching is used in many situations where the parties involved are interested in misreporting their true valuations to obtain a better outcome: e.g., combinatorial and ad-auctions, kidney exchange, school-student matching, and house allocation. In the presence of strategic agents, an agent's value is her private information, and is not directly available to the mechanism designer.
The main challenge here is to design \emph{incentive-compatible} or \emph{truthful} mechanisms which, besides finding a good matching, also ensure that it is in the agents' best interest to report their true values.
In addition to making decisions regarding the matching itself, such mechanisms can also charge some payment from the agents in order to incentivize them to truthfully report their values. Here, each agent strives to maximize her \emph{quasi-linear utility}, which is defined as the value she obtains from her assigned item, minus the payment she has to make.

In almost all previous studies, the agents are represented by the vertices in the online side, while the items they are competing over are available offline. 
In many natural Internet applications, e.g., selling advertising opportunities via repeated auctions, the agents are fixed and observe a stream of items arriving online.
This motivates the study of a \emph{reversed} online matching problem, where the offline side is strategic on her value and on her set of desired items that arrive online.
This particular variant has been considered thus far only in very restricted settings \cite{cole2008prompt, DPZ21}.
This is not a coincidence: when agents are present throughout the entire matching process, many new manipulation opportunities arise, and incentivizing truthful behavior is significantly more challenging. Indeed, the online nature of the problem forces any mechanism to repeatedly make irrevocable decisions upon the arrival of goods, lacking knowledge about future opportunities that might arise to the participating agents.
The agents --- possibly aware of those future opportunities --- may strategize to gain benefits in the future, challenging standard tools that have been applied in cases where agents arrive online.

Our work provides a systematic analysis of this scenario, and gives (almost) tight competitive ratios under a rich combination of natural assumptions. We study this problem along different dimensions, as follows. First, we consider two types of agents --- myopic and non-myopic --- that are characterized by the different information they have on the instance. Myopic agents make strategic considerations that are limited to the current time step, without looking forward into the future (see, e.g., Deng,~Panigrahi~and~Zhang~\cite{DPZ21}), whereas non-myopic agents optimize across multiple time steps, using the up-front knowledge of the underlying (online) graph.
The assumption of myopic agents clearly eradicates some of the difficulties of designing (almost tight) online mechanisms with offline strategic agents thus allowing to derive efficient mechanisms from known online matching algorithms, e.g., from Aggarwal~et~al.~\cite{AggarwalGKM11}.
Second, we consider two types of private information. In the first scenario we consider, an agent's private information consists of her private value for her desired items, but the set of desired items is publicly known. In the second scenario, both the value and the set of desired items are private information.

Notably, in both cases the graph structure is revealed to the mechanism step-by-step, upon the arrival of every item. Finally, we distinguish between \emph{prompt} and \emph{tardy} mechanisms.
Both types of mechanisms make allocation decisions immediately. 
However, they differ in the time at which they make payment decisions. 
Prompt mechanisms make payment decisions  immediately upon allocation, while tardy mechanism may delay payment decisions to the end of the entire process. 

\subsection{Our Results and Techniques}

We conduct a systematic study of online bipartite matching with online items and offline agents, in a variety of scenarios and we provide (almost) tight bounds for the settings of interest, as summarized in Table \ref{table:results}. \vspace{-7pt}

\paragraph{Myopic agents.}
We start by investigating the simpler setting of myopic agents. 
These agents care only about their instantaneous utility, and do not strategize over the future. As such, we only consider prompt mechanisms for this type of agents.
Exploiting the myopic nature of the agents, it is not difficult to turn the best (non-truthful) algorithms into (truthful) mechanisms. In particular, we construct a deterministic prompt mechanism based on the greedy matching algorithm that is guaranteed to achieve at least a half of the optimal welfare. We also give a randomized prompt mechanism based on the algorithm for weighted online matching \cite{AggarwalGKM11}, which is $e/(e-1)$-competitive.
This shows that the transition from non-strategic agents to strategic myopic agents does not
lead to a deterioration in efficiency guarantees. Notably, for the special case we study, our bounds for myopic online matching improve vastly over those obtained by Deng,~Panigrahi~and~Zhang~\cite{DPZ21} for general XOS valuations. The results for myopic agents are presented in \Cref{table:myopic}.\vspace{-7pt}

\begin{table}[t]
    \footnotesize
    \captionsetup[subtable]{justification=centering}
    \begin{subtable}{.32\linewidth}
        \begin{tabular}{|c|c|c|}
        \cline{2-3}
        \multicolumn{1}{c|}{}&
        \multirow{2}{*}{deterministic}&
        \multirow{2}{*}{randomized}\\
        \multicolumn{1}{c|}{}&
        &\\
    \hline
        \multirow{2}{*}{prompt}&
        $2$&
        $e/(e-1)$\\
        &
        (\Cref{thm:myopic-greedy})&
        (\Cref{thm:myopic-rank})\\
        \hline
\end{tabular}
        \caption{Myopic agents\\~}
        \label{table:myopic}
    \end{subtable}
    \begin{subtable}{.36\linewidth}
        \begin{tabular}{|c|c|c|}
        \cline{2-3}
        \multicolumn{1}{c|}{}&
        \multirow{2}{*}{deterministic}&
        \multirow{2}{*}{randomized}\\
        \multicolumn{1}{c|}{}&
        &\\
    \hline
        \multirow{2}{*}{tardy}&
        $2$&
        $e/(e-1)$\\
        &
        (\Cref{thm:public-greedy})&
        (\Cref{thm:public-greedy})\\
    \hline
        \multirow{2}{*}{prompt}&
        $\geq \nu$&
        $\Omega(\log\nu/\log\log\nu)$
        \\
        &
        (\Cref{thm:expost-det-lb})&
        (\Cref{thm:expost-rand-lb})
        \\
    \hline
\end{tabular}
        \caption{Non-myopic agents\\with \emph{public} graph edges}
        \label{table:public}
    \end{subtable} 
    \begin{subtable}{.31\linewidth}
        \begin{tabular}{|c|c|c|}\cline{2-3}
        \multicolumn{1}{c|}{}&
        \multirow{2}{*}{deterministic}&
        \multirow{2}{*}{randomized}\\
        \multicolumn{1}{c|}{}&
        &\\
    \hline
        \multirow{2}{*}{tardy}&
        \multirow{2}{*}{\tiny equiv. to prompt}&
        \multirow{2}{*}{\tiny equiv. to prompt}
\\
        &
        &
\\
    \cline{1-1}
        \multirow{2}{*}{prompt}&
        $\leq \nu$&
        $\mathcal O(\log\nu)$
        \\
        &
        (\Cref{thm:expost-det-ub})&
        (\Cref{thm:expost-rand-ub})
        \\
    \hline
\end{tabular}
        \caption{Non-myopic agents\\with \emph{private} graph edges}
        \label{table:private}
    \end{subtable} 
    \caption{\footnotesize Summary of our results, with $\nu=\min(m,n)$, where $n$ is the number of agents and $m$ the number of items. 
    }
    \label{table:results}
    \vspace{-25pt}
\end{table}

\paragraph{Non-myopic agents with public graph edges.} 

Next we consider non-myopic agents who can strategize about their values but not about their desired items: upon the arrival of an item, the set of agents interested in it is revealed (no strategizing involved), but the agent values are reported by the agents themselves. This variant is single-parameter, for which Myerson's lemma applies \cite{Myerson81}.
We prove that, if the mechanism is allowed to wait until the end of the online phase to set prices (i.e., tardy mechanism), then it is possible to achieve the same bounds as in the myopic case, subject to showing that the \greedy and \pgreedy algorithms induce a certain form of {\em global monotonicity}. For prompt mechanisms, in contrast, we establish strong impossibility results. For deterministic mechanisms, we prove a $\nu=\min(m,n)$ competitive lower bound, where $n$ and $m$ denote the number of agents and items, respectively. This sharp deterioration from tardy to prompt mechanisms occurs since in order to prevent buyers from strategizing over future rounds, the prices must be non-decreasing.
Tardy mechanisms circumvent this by assigning payments to agents in the end of the entire process.
A matching $\nu$ upper bound is inherited from the more general case of \emph{private graph edges}, presented below.
For \emph{randomized} prompt mechanisms we establish an $\Omega(\log\nu/\log\log\nu)$ lower bound, using Yao's Minimax principle.
Starting from a carefully designed distribution of problem instances with exponentially increasing agent valuations, we employ a primal-dual approach together with our previous observations on the behavior of \emph{deterministic} truthful mechanisms to bound the achievable competitive ratio.  An almost matching upper bound is inherited from randomized non-myopic prompt mechanisms with private graph edges. The results for non-myopic agents with public graph edges are presented in \Cref{table:public}.\vspace{-7pt}

\paragraph{Non-myopic agents with private graph edges.}   
We finally consider non-myopic agents when both valuations and the set of desired items are private information. 
For deterministic prompt mechanisms, the $\nu$ lower bound from the case of public graph edges applies. 
Moreover, we show that in the case of private edges, every deterministic truthful mechanism is essentially prompt.
Thus, tardy mechanisms for this case retain the $\nu$ lower bound, exhibiting a large gap between tardy mechanisms for public vs. private edges.
We then provide a prompt truthful deterministic mechanism that is $\nu$-competitive, matching the lower bound.
For randomized prompt truthful mechanisms, the $\Omega(\log\nu/\log\log\nu)$ lower bound from the case of public edges applies. This lower bound extends to tardy randomized mechanisms as well, since these are probability distributions over deterministic mechanisms and, as stated above, all deterministic truthful mechanisms for private edges are prompt. 
On the positive side, we provide a randomized prompt truthful mechanism that gives an almost matching competitive ratio of $O(\log \nu)$. 
This algorithm is based on an explore-exploit approach specifically tailored to our case.  

\paragraph{Ex-post vs. ex-ante truthfulness.}
Finally, we explore the notion of \emph{ex-ante} truthfulness, as opposed to \emph{ex-post} truthfulness, where agent's true declarations maximize their expected utility instead of their utility in \emph{any realization} of the random choices of the mechanism. Clearly, ex-post truthfulness implies ex-ante truthfulness.
In the setting with myopic buyers, we only need to consider ex-post truthfulness as we obtain tight approximation in this stronger model that closes the problem also for the ex-ante analogue.
In the setting of non-myopic buyers, we show that the additional hardness introduced by truthfulness cannot be fully attributed to the fact that we require ex-post truthfulness. 
Specifically, we establish a lower bound of $2$ for the competitive ratio of ex-ante truthful mechanisms for this setting (even with respect to randomized tardy ones), exhibiting a gap from the corresponding $e/(e-1)$ upper bound for myopic buyers.  
Our proof utilizes an instance for which we  establish lower bounds on the expected utility of various types of agents. We then employ these to show a contradiction to the mechanism's correctness.

\paragraph{Remark.} 
Throughout the paper, we assume that weights are assigned to vertices (agents) rather than edges. 
Indeed, it is well known that for the more general case of edge weights, even the algorithmic problem is hopeless (see, e.g., Appendix G of \cite{AggarwalGKM11}).
In addition, one may wonder why we do not study the case of non-myopic agents with public valuations but private edges. 
The reason is that in the case of public valuations, it is easy to see that agents cannot benefit from misreporting their edges, implying that \greedy and \pgreedy are inherently truthful.

\subsection{Related Work}
Karp~Vazirani~and~Vazirani~\cite{karp1990optimal} introduced the online matching problem, and studied it under one-sided bipartite arrivals. They observe that the trivial 1/2-competitive greedy algorithm (which matches any arriving vertex to an arbitrary unmatched neighbor, if one exists) is optimal among deterministic
algorithms for this problem. They also provide a groundbreaking and elegant randomized algorithm for this problem, called \ranking, which achieves an optimal $e/(e-1)$ competitive ratio. The work of Karp~Vazirani~and~Vazirani~\cite{karp1990optimal} was extended to vertex weighted settings by Aggarwal~et~al.~\cite{AggarwalGKM11}, who give an optimal $e/(e-1)$-competitive, randomized algorithm using \emph{random perturbations} of weights by appropriate multiplicative factors. The same bound has been re-proven over the years \cite{birnbaum2008line, devanur2013randomized, feige2019tighter, EdenFFS21}. 
Various extensions of one sided online matching and its economic applications (e.g., display ads) have been widely studied over the years,  see e.g. the  excellent survey of Mehta~\cite{mehta2012online} for further reference. Online matching has also been studied under edge and general vertex arrivals, as well as in different stochastic settings (see e.g., \cite{kesselheim2013optimal, korula2009algorithms, ezra2020prophet,GravinTW21,GamlathKS19,gamlath2019online}).

An important generalization of assignment problems in the form of matchings are combinatorial auctions, where buyers can obtain a \emph{subset} of the available items, instead of just one.
Combinatorial auctions with offline strategic buyers and online items has been recently studied by \cite{DPZ21} for submodular and XOS valuations in the case of myopic buyers - considered also in this work - and in the less constrained setting of items that must not be irrevocably assigned at time of arrival. Deng,~Panigrahi~and~Zhang~\cite{DPZ21} show (for myopic buyers) a sharp separation between submodular valuations, which admit a logarithmic competitive ratio, and XOS valuations, for which a polynomial lower bound is proven. In our work, we prove tight constant bounds for myopic buyers in the important special case of a unit-demand matching setting. 

Cole,~Dobzinski~and~Fleischer~\cite{cole2008prompt} formally introduced the notions of \emph{prompt} and \emph{tardy} mechanism, after observing the severe negative aspects of many existing (tardy) methods. They study prompt trutfhul mechanisms for an online problem that is related to ours, but with some restrictions: while agents can be thought as being on the offline side of the graph, their items of interest are restricted to correspond to form an interval over the online steps (which corresponds to the interval buyers are present). Further, agents report their departure time (which can be public/private) once they arrive, and their arrival time is public knowledge. 
Their work is probably closest to ours in spirit, presenting e.g. a logarithmic-competitive, prompt mechanism for the above, less general variant of our problem with private departures.
The notions of tardy and prompt mechanisms have since been adopted in the literature, see e.g. \cite{AzarK11, Xiangzhong15}.
The model of offline agents and online items has been the subject of extensive investigation in economic theory in dynamic mechanism design. Despite this obvious relation to our setting, there are fundamental differences (see for example \cite{BPTZ18, AS2013, Bergemann2019}). 
In dynamic mechanism design, a strategic buyer learns her valuations at time of arrival of each item. Opposed to our setting, priors on agents' valuations for each online item are usually known beforehand. Finally, in our matching setting the agents' valuations can assume only two values, $v_i$ and $0$, and we consider unit demand buyers instead of additive valuation agents as it is customary in dynamic mechanism design. 

\section{Preliminaries}

We are given a bipartite graph $G=(B, I; E)$, where $B$ is a set of $n$ vertices, corresponding to buyers, $I$ is a set of $m$ vertices, corresponding to items, and  $E\subseteq B\times I$ is the set of edges. As aready mentioned, we denote with $\nu$ the smallest between the number of buyers $n$ and the items $m$.
The set of buyers is known beforehand, while the items arrive one by one in some unknown, possibly adversarial order.
Without loss of generality assume that item $j$ arrives at time $j$.
Each buyer $i$ has two pieces of private information: the set of items she is interested in, and her value $v_i$ if she gets at least one of them (the value for other items is 0).
Upon the arrival of a new item, every buyer declares if she is interested in the current item and, if yes, her value.
Let $\bid{i}{j}$ denote the bid of buyer $i$ for item $j$ (with the convention that $\bid{i}{j}=0$ if buyer $i$ is not interested in item $j$).
Without loss of generality, we may assume that buyers cannot change their declared valuation after they have declared it once\footnote{Mechanisms can ``punish'' such behavior by discarding the buyer from further consideration}, i.e. every nonzero bid of the same buyer is the same value $b_i$, and that every buyer is assigned at most one item.

A mechanism $\M$ is composed of an {\em allocation} scheme and a {\em payment} scheme.  Upon the arrival of every item, and based on buyer bids, the mechanism decides immediately and irrevocably to either assign the new item to some buyer who has not been assigned an item yet, or leave it unassigned forever. 
Thus, the resulting allocation is a matching in $G$: every buyer receives at most one item, and every item is allocated to at most one buyer.
We denote by $\mu$ the induced matching, so that $\mu_j$ denotes the buyer to whom item $j$ is assigned (we assume that an item $j$ can only be assigned to a buyer who declares interested in $j$). If $j$ is unassigned, we write $\mu_j = \emptyset$. We also write $\mu^{-1}_i$ to denote the item assigned to buyer $i$, with the convention that $\mu^{-1}_i = \emptyset$ if $i$ is left unassigned.
The \emph{allocation} is computed online; i.e., $\mu_j$ is determined using only the bids on items up to $j$. In addition to the allocation, the mechanism decides how much each buyer should pay. A payment scheme is denoted by $\payment{}$, where $\payment{i}$ denotes the non-negative payment of buyer $i$.  We distinguish between two types of {\em payment} schemes, according to the time at which the mechanism determines the payment. 
A {\em tardy} mechanism is one where the payment vector $\payment{}$ is computed in the end of the process. 
A \emph{prompt} mechanism is one where the payment $\payment{i}$ of every buyer $i$ is determined upon the assignment of buyer $i$ (i.e., upon the arrival of item $\mu^{-1}_i$). 
The mechanism's objective is to maximize the {\em social welfare} of the allocation $\mu$, which is the sum of the buyer values for their assigned items. 
The social welfare is given by $\sw(\mu) = \sum_{i\in B} v_i \cdot \ind{(i,\mu^{-1}_i) \in E}$.
Note that a mechanism can also be randomized, so that its allocation is a distribution over matchings. 
In case of a randomized mechanism, we measure its efficiency by the expected social welfare. 
We say that a mechanism gives an $\alpha$ approximation, or is $\alpha$-competitive (where $\alpha \geq 1$), if its (expected) social welfare is at least an $1/\alpha$ fraction of the welfare of a maximum weight matching. 
That is, $\mu$ is $\alpha$-competitive if $ \opt = \sw(\mu^{\star}) \le \alpha \cdot \E{\sw(\mu)}, $ where $\mu^{\star}$ is the maximum weight matching in $G$.

A \textit{bidding strategy} $\Bid{i}$ for buyer $i$ is a sequence of bids $\bid{i}{j}$ that specifies, every time a new item $j$ arrives, whether to declare interest in it and which value to report. The bid $\Bid{i}$ might depend on the bids of the other agents, the actions of the mechanism, and the knowledge the buyers have on the sequence of items. 
Recall that once an agent declares a positive valuation $\bid{i}{j}=\bid{i}{} > 0$ for some item $j$, she cannot change her value thereafter; namely, all bids for future items $j'$ can take the value of either $\bid{i}{}$ or $0$. 
Let $\mathcal{B}$ denote the profile of buyer bidding strategies, and $\mathcal{B}_{-i}$ denote the profile of all buyer strategies excluding buyer $i$. 
We assume that every buyer has a quasi-linear utility function: $ u_i(\M, \Bid{i},\mathcal{B}_{-i}) = v_i \cdot \ind{(i,\mu_i^{-1}) \in E} - \payment{i}.$

A buyer is called \emph{myopic} if upon the arrival of every item $j$, she cares only about maximizing her utility in that round, without considering its effect on future rounds. I.e., upon the arrival of item $j$, she maximizes the utility function $ u_{i,j}= v_i \cdot \ind{(\mu_j=i,\, (i,j)\in E)}-\payment{i}.$
We consider myopic agents only in the context of prompt mechanisms, where the price $p_i$ is determined immediately.
We study the following ex-post notion of truthfulness: $(i)$ A mechanism for myopic agents is \emph{truthful} if it is always in the best interest of a myopic buyer to declare her value truthfully. $(ii)$ A mechanism for non-myopic agents  is \emph{truthful} if an agent maximizes her utility for every realization of the mechanism by declaring her value truthfully.
Finally, we only consider mechanisms that are {\em ex-post individually rational}, meaning that all agents (myopic or not) have non-negative utility, for every realization of the mechanism.

\section{Truthful Mechanisms for Myopic Buyers}
\label{sec:myopic}
    In this section we study myopic buyers and we show that for this class of agents it is possible to make strategy proof the best (non-truthful) algorithms \citep{karp1990optimal}. In particular, we construct a deterministic prompt mechanism that is guaranteed to achieve at least half of the welfare of the best offline matching, and a randomized prompt mechanism that is (in expectation) $e/(e-1)$-competitive with the best offline matching. 
    
    We start describing our deterministic mechanism \tgreedy, that mimics the classical \greedy algorithm for online weighted matching in a way that is robust to strategic bidding. Every time a new item arrives, \tgreedy runs a second price auction \cite{Vickrey61} to allocate it between the remaining (interested) buyers. Since the buyers are myopic, every time a new item arrives, they behave like if it was the last: clearly there is no point in lying about being interested in an item. Moreover, the truthfulness in each step (as well as the individual rationality) is guaranteed by the well-known properties of the second price auction. Note that the mechanism sets the price for item $i$ immediately, so it is prompt. 
    The analysis of the approximation guarantee is also quite simple: the allocation output by \tgreedy is the same one that the standard \greedy algorithm would have computed. It is well known that \greedy is $2$-competitive with respect to the best offline matching (see, e.g., Appendix B of \cite{AggarwalGKM11}), and that this approximation is tight in the class of deterministic algorithms \cite{karp1990optimal}. We summarize these observations in the following theorem.
    \begin{theorem}
    \label{thm:myopic-greedy}
        The deterministic prompt mechanism \tgreedy is truthful for myopic agents and guarantees a $2$ approximation to the best offline matching. The approximation is tight even for (non-truthful) deterministic algorithms.
    \end{theorem}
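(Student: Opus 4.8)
The plan is to establish the three assertions separately: truthfulness together with individual rationality and promptness, the $2$-competitiveness, and the matching lower bound.

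For \emph{truthfulness}, I would describe \tgreedy precisely as follows: when item $j$ arrives, it runs a Vickrey auction among the buyers who are still unmatched and declared interest in $j$, allocating $j$ to the highest declared bidder at a price equal to the second-highest declared bid, and leaving $j$ unallocated only if nobody declares interest. A myopic buyer at step $j$ maximizes $u_{i,j}=v_i\cdot\ind{(\mu_j=i,\,(i,j)\in E)}-\payment{i}$, so it suffices to check that truthful reporting is a best response to this single-round objective given any history and any declarations of the others — and this is exactly the dominant-strategy incentive compatibility of the second-price auction: if $(i,j)\in E$, declaring interest with value $v_i$ weakly dominates declaring any other value and weakly dominates abstaining (overbidding risks winning above $v_i$, underbidding or abstaining risks forgoing a profitable win); if $(i,j)\notin E$, declaring interest can only yield an item worth $0$ at a nonnegative price, which is weakly worse than abstaining. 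The same computation yields \emph{ex-post individual rationality}: the winner pays at most her own truthful bid, and every losing buyer pays $0$. Finally, $\payment{i}$ is fixed at the very moment item $\mu_i^{-1}$ is allocated, so the mechanism is prompt.

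For the \emph{approximation guarantee}, the key observation is that under truthful bidding the matching produced by \tgreedy coincides with the matching produced by the classical \greedy algorithm: each arriving item goes to the unmatched interested buyer of largest value (ties broken arbitrarily). It then remains to reprove the folklore $2$-competitiveness of \greedy for vertex-weighted matching, which I would do by a charging argument. Fix a maximum-weight matching $\mu^\star$; for each edge $(i,j)\in\mu^\star$ charge $v_i$ either to $i$ itself if $i$ is matched by \greedy, or, if $i$ is unmatched by \greedy, to the buyer $i'$ to whom \greedy allocated $j$ (such $i'$ exists and satisfies $v_{i'}\ge v_i$, since $i$ was an available option when $j$ arrived). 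Each buyer $i'$ matched by \greedy receives at most one charge of the first kind (of value exactly $v_{i'}$, as $\mu^\star$ is a matching) and at most one charge of the second kind (routed through the unique item $i'$ was allocated, of value at most $v_{i'}$), so the total charge is at most $2\,\sw(\mu)$, while by construction it equals $\sum_{(i,j)\in\mu^\star}v_i=\opt$.

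For \emph{tightness}, I would invoke the standard deterministic lower bound: consider two unit-valued buyers; item $1$ is desired by both, and any deterministic algorithm must assign it to some buyer $k$ (assigning it to no one is already unboundedly bad). Then item $2$ arrives, desired only by buyer $k$, and cannot be allocated since $k$ is already matched, so the algorithm's welfare is $1$ while $\opt=2$ (item $1$ to the other buyer, item $2$ to $k$); hence no deterministic algorithm, truthful or not, beats ratio $2$. The most delicate point of the whole argument is the first one: one must be sure that single-round optimality of truthful bidding, combined with myopia, genuinely rules out \emph{all} deviations — in particular deviations that misreport interest, and deviations exploiting the constraint that a buyer's nonzero bids must all be equal — but since a myopic buyer evaluates each round in isolation and the per-round game is a second-price auction, no such deviation can be profitable.
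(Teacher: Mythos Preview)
Your proposal is correct and follows essentially the same approach as the paper: per-round truthfulness via the second-price auction, reduction of the allocation to \greedy, the $2$-competitiveness of \greedy, and the standard two-buyer lower bound. The only difference is that the paper dispatches the $2$-approximation and the tightness by citing Aggarwal~et~al.\ and Karp--Vazirani--Vazirani, whereas you spell out the charging argument and the adversarial instance explicitly; both are fine.
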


We complement this deterministic $2$-competitive, simple mechanism with an optimal, randomized $e/(e-1)-$competitive alternative, \ptgreedy, based on \pgreedy of Aggarwal~et~al.~\cite{AggarwalGKM11}. There, each offline vertex is associated with a random multiplier; then, every time one of the online vertices arrives, it is matched to the free neighbor with largest multiplier-value product. To protect from the strategic behavior of agents, \ptgreedy declares - before the beginning of the online phase - publicly all random multipliers, and then implements Myerson's payment rule \citep{Myerson81} for every round. For a formal description we refer to the pseudocode of \ptgreedy, where we maintain the convention that the $\max$ of an empty set is $0$ and thus if $N(j)$ is empty in line \ref{line:allocation}, then $j$ is discarded and the mechanism passes to the next item. The properties of \ptgreedy are summarized in the following Theorem, whose formal proof is deferred to the Appendix.

\begin{algorithm*}[t]
\begin{algorithmic}[1]
\For{each buyer $i$}
\State Draw $x_i$ uniformly at random from $[0,1]$
\State Let $y_i = 1 - e^{x_i - 1}$
\EndFor
\State Reveal publicly all $x_i$ and $y_i$
\For{item $j$ arriving online}
    \State Receive bids for $j$ and let $N(j)$ be the set of agents interested in $j$
    \State Allocate $j$ to $i^{\star} \in \argmax\{b_i \cdot y_i \mid i \in N(j)\}$
    \Comment{Allocation Rule} \label{line:allocation}
    \State Charge $i^{\star}$ with $p_{i^{\star}} =\max\left\{\frac{y_i}{y_{i^{\star}}} b_i \mid i \in N(j) \setminus \{ i^{\star}\}\right\}$ \Comment{Payment Rule} \label{line:payment}
    \State Discard for further consideration $i^{\star}$
\EndFor
 \caption*{\ptgreedy}
 \label{a:learning-model}
\end{algorithmic}
\end{algorithm*}

\begin{restatable}{theorem}{thmmyopicrank}  
\label{thm:myopic-rank}
        The randomized prompt mechanism \ptgreedy is truthful for myopic agents and achieves (in expectation) a $e/(e-1)$ approximation to the best offline matching. The approximation is tight even for (non-truthful) randomized algorithms.
\end{restatable}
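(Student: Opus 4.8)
The plan is to prove the two assertions of the statement separately. The first --- that \ptgreedy is ex-post individually rational and truthful for myopic agents --- is a purely per-round argument exploiting the defining feature of myopic buyers, namely that they evaluate each round in isolation. The second --- the $e/(e-1)$ guarantee, and its optimality even among non-truthful randomized algorithms --- follows by reduction to the analysis of \pgreedy in \cite{AggarwalGKM11} and to the lower bound of \cite{karp1990optimal}.

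For truthfulness, fix the publicly revealed realization of the multipliers $x_i$ (hence of the $y_i$), an arbitrary round in which some item $j$ arrives, and the reports $\bid{i'}{j}$ of all buyers $i'\neq i$. Since a myopic buyer cares only about $u_{i,j}$, it suffices to show that bidding $\bid{i}{j}=v_i$ --- and declaring interest exactly when $(i,j)\in E$ --- maximizes her round-$j$ utility. Restricted to round $j$, lines~\ref{line:allocation}--\ref{line:payment} implement a weighted second-price (Vickrey) auction \cite{Vickrey61}: as a function of $\bid{i}{j}$, buyer $i$ is allocated $j$ iff $\bid{i}{j}\,y_i\ge \bid{i'}{j}\,y_{i'}$ for every $i'\in N(j)$, equivalently iff $\bid{i}{j}\ge \pi_i:=\max\{(y_{i'}/y_i)\,\bid{i'}{j}\mid i'\in N(j)\setminus\{i\}\}$, which is a monotone threshold rule with critical value $\pi_i$, and the winner is charged precisely $\pi_i$. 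This is Myerson's payment rule \cite{Myerson81} for a single-parameter monotone allocation rule, so truthfully reporting the value is a dominant strategy in round $j$; moreover the winner pays $\pi_i\le\bid{i}{j}=v_i$ while a non-winner pays nothing, which gives ex-post individual rationality. Declaring interest in an item with $(i,j)\notin E$ can only cause $i$ to win $j$ for value $0$ at a non-negative price, which is never profitable, and the cross-round constraint that all positive bids of $i$ equal a common $b_i$ only shrinks $i$'s strategy space; hence per-round incentive compatibility is all that is needed.

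For the competitive ratio, assume truthful play, so $\bid{i}{j}=v_i\cdot\ind{(i,j)\in E}$ and $N(j)$ is the true set of still-unmatched neighbors of $j$. Then line~\ref{line:allocation} assigns each arriving item to the available interested buyer maximizing $v_i\,(1-e^{x_i-1})$ with the $x_i$ drawn i.i.d.\ uniformly from $[0,1]$, which is exactly \pgreedy of \cite{AggarwalGKM11} on the vertex-weighted instance whose online side is the set of items and in which buyer $i$ has weight $v_i$. Hence the induced matching $\mu$ has the same distribution as the output of \pgreedy, and since \pgreedy is $e/(e-1)$-competitive against the maximum-weight matching --- which here has value $\opt$ --- we get $\opt\le\tfrac{e}{e-1}\,\E{\sw(\mu)}$. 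Tightness is inherited from online bipartite matching: the lower-bound instance of \cite{karp1990optimal} uses unit vertex weights, so relabeling its online side as items and its offline side as buyers with $v_i=1$ yields an instance of our setting on which no randomized online matching algorithm --- in particular no randomized mechanism, truthful or not --- can beat $e/(e-1)$.

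I expect the only delicate point to be the truthfulness part: recognizing lines~\ref{line:allocation}--\ref{line:payment} as a monotone single-parameter allocation rule paired with its critical-value payment (which then delivers incentive compatibility and individual rationality together), and checking that the myopic assumption together with the cross-round consistency constraint leave buyers with strictly fewer, not more, profitable deviations, so that no forward-looking analysis is required. The measure-zero event in which all products $\bid{i}{j}\,y_i$ vanish --- which we may resolve by leaving $j$ unallocated --- plays no role anywhere in the argument.
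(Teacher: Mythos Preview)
Your proposal is correct and follows essentially the same approach as the paper: per-round truthfulness via recognizing lines~\ref{line:allocation}--\ref{line:payment} as a monotone single-parameter allocation with Myerson's critical-value payment, and the competitive ratio by observing that under truthful play the allocation coincides with \pgreedy, inheriting both the $e/(e-1)$ guarantee of \cite{AggarwalGKM11} and the matching lower bound of \cite{karp1990optimal}. Your treatment is in fact slightly more careful than the paper's in spelling out the threshold $\pi_i$ explicitly, handling the interest-declaration direction via the round-$j$ utility rather than a forward-looking argument, and noting that the cross-round consistency constraint only shrinks the strategy space.
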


\section{Non-myopic buyers with public graph edges}
\label{sec:public}

We now move our attention to a more demanding notion of truthfulness, where agents are assumed to know, and strategize about, the whole sequence of items arriving. Note that this is a strong information asymmetry between agents and mechanism, as the latter only discovers the items as they are revealed online and has no information on the future. As a first step in this challenging model, in this Section we study the case where agents may only lie on their valuations. Our main focus here is on establishing lower bounds, which will naturally extend to the case where the edges of the graph are private information.

\subsection{Tardy truthful mechanisms}
\label{sec:tardy_truthful}
When the graph edges are public knowledge, we can turn once again to using the algorithmic approaches outlined in the previous Section, i.e. \greedy and \pgreedy. Now that agents cannot strategically withhold or misreport the existence of edges, a tardy truthful mechanism can use the whole graph structure (but of course still not the reported value $\bid{i}{}$) when computing the price charged from any buyer $i$. The prompt, round-wise payment rules from our considerations on myopic buyers, however, do not guarantee non-myopic truthfulness. What remains to prove therefore is that these algorithms can be augmented by a different (tardy) payment rule to be made truthful. This is formally done in two steps in the Appendix: first, it is established that the allocations produced are monotone, and then Myerson's Lemma is employed on the whole algorithm. All in all, we obtain the following Theorem. 

    \begin{restatable}{theorem}{thmpublicgreedy}  
    \label{thm:public-greedy}
        There exists a deterministic, respectively randomized, tardy mechanism that is truthful for non-myopic agents with public graph edges and guarantees a $2$, resp. $e/(e-1)$, approximation to the best offline matching. The approximation is tight even for (non-truthful) deterministic, resp. randomized,
        algorithms.
    \end{restatable}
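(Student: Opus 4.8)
The plan is to establish the upper bounds by exhibiting tardy mechanisms built on top of \greedy and \pgreedy, and then invoke Myerson's lemma. Since the setting with public graph edges and private values is single-parameter (each buyer $i$ has a one-dimensional type $v_i$, and her allocation is whether she gets matched or not), the key is to show that the allocation rule, viewed as a function of $b_i$ with the other bids fixed, is monotone: raising one's reported value can only (weakly) increase the probability of being matched. Once monotonicity is in hand, Myerson's lemma gives the unique (up to a constant, which individual rationality pins to zero) truthful payment rule, which is the integral of the allocation curve; since this payment can be computed after the whole item sequence has arrived, it defines a valid tardy mechanism, and the allocation — hence the social welfare — is unchanged from the non-strategic algorithm. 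Thus the approximation guarantees of $2$ for \greedy and $e/(e-1)$ for \pgreedy carry over verbatim, and tightness is inherited from the known lower bounds against non-truthful algorithms \cite{karp1990optimal}.

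The main technical step, therefore, is the monotonicity claim, which in this online setting is the ``global monotonicity'' alluded to in the text. For \greedy this is the following statement: fix $\mathcal B_{-i}$ and the item arrival order; if buyer $i$ is matched when reporting some value $b_i$, she is also matched when reporting any $b_i' > b_i$. The natural approach is a step-by-step coupling / exchange argument along the arrival sequence, tracking how the set of still-unmatched buyers evolves under $b_i$ versus under $b_i'$. The delicate point is that increasing $b_i$ can change \emph{which} item $i$ grabs and can displace a different buyer, so the two runs diverge; one must maintain an invariant — roughly, that at every time step the run with the higher bid $b_i'$ is ``at least as good'' for buyer $i$, e.g. either $i$ is already matched in the $b_i'$-run, or the set of items still available to $i$ in the $b_i'$-run dominates (in an appropriate sense, accounting for tie-breaking) the set available in the $b_i$-run. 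I expect this coupling invariant, and in particular handling the cascade of displacements cleanly and dealing with ties consistently, to be the main obstacle. For \pgreedy the argument is analogous but carried out pointwise for each fixed realization of the multipliers $(x_i, y_i)$: conditioned on the multipliers, \pgreedy is just \greedy with perturbed weights $b_i y_i$, so the same monotone-coupling lemma applies (increasing $b_i$ increases $b_i y_i$ with the other perturbed weights fixed), and then one averages over the multipliers — crucially, the multipliers are drawn and published \emph{independently} of the bids, so this conditioning is legitimate and the resulting mechanism is truthful ex-post.

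Concretely I would organize the appendix proof as: (1) a lemma stating and proving global monotonicity of the \greedy allocation rule via the coupling invariant above; (2) a short corollary extending it to \pgreedy by conditioning on $(x_i,y_i)$; (3) invoking Myerson's lemma to define the tardy payments $p_i = b_i \cdot a_i(b_i) - \int_0^{b_i} a_i(t)\, dt$, where $a_i(\cdot)$ is the (0/1 or, in the randomized case, $[0,1]$-valued) allocation of buyer $i$ as a function of her reported value — note this integral only depends on quantities known at the end of the online phase, so the mechanism is tardy, and it yields nonnegative prices and nonnegative utility, giving ex-post individual rationality; (4) observing that since allocations coincide with those of \greedy / \pgreedy, the welfare bounds of Theorems~\ref{thm:myopic-greedy} and \ref{thm:myopic-rank} transfer unchanged; (5) noting tightness is immediate since these truthful mechanisms are in particular (non-truthful) algorithms, to which the $2$ and $e/(e-1)$ lower bounds of \cite{karp1990optimal} apply.
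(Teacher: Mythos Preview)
Your proposal is correct and follows exactly the paper's approach: observe the setting is single-parameter, establish monotonicity of the \greedy (and, conditioning on the published multipliers, \pgreedy) allocation rule, and apply Myerson's Lemma to get tardy critical-value payments. One remark: the monotonicity step you flag as ``the main obstacle'' is simpler than you anticipate---the two runs with bids $b_i$ and $b_i'>b_i$ are \emph{identical} until the first step at which $i$ wins in the $b_i'$-run (if she wins at $b_i$ she also wins at $b_i'$ against the same unmatched set), so $i$ is matched in the $b_i'$-run no later than in the $b_i$-run and no cascade analysis is needed; the paper accordingly just says monotonicity is ``easy to see.''
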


A last observation: while the allocation computed by the mechanisms we just described are analogue to the ones computed by \tgreedy and \ptgreedy, the payments are different! We are still using Myerson's Lemma, but the critical prices\footnote{The critical price paid by an agent is the smallest bid that would have still resulted in an item being allocated to the agent. See Appendix for a formal definition} are clearly different, as they are computed considering the whole run of the algorithm. To see this, consider the following example. There are two buyers, $b_1$ and $b_2$, and two items $i_1$ and $i_2$. $b_1$ is interested in both the items and has a value of $1$, while $b_2$ only cares about $i_1$, with a value of $0.9$. Assume also for the sake of simplicy that the perturbations $y_1$ and $y_2$ of \pgreedy are both $1$. Both versions of \pgreedy would only allocate $i_1$ to $b_1$, but at two different prices: the mechanism for myopic agents would charge $0.9$, while the tardy one for non-myopic agents would wait the end of the second round and charge $0$. 

\subsection{Prompt deterministic truthful mechanisms}
When mechanisms are required to be prompt, the problem becomes much harder despite the fact that each agent's private information is just a single value.
This is due to the online nature of the problem versus the possibly universal knowledge of the buyers, as outlined in the introduction. We first concentrate on deterministic prompt truthful mechanisms, and prove that the scope of these is indeed quite limited.
\newcommand{\crititemproperty}{critical item property}
\begin{definition}[\crititemproperty]\label{def:crititem}
    We say that a deterministic mechanism satisfies the \emph{\crititemproperty} if and only if for every buyer $i$, there exists some $j\in I$ such that for any reported value $b_i$ of $i$, the mechanism assigns $i$ with item $j$, or none at all.
    Note that $j$ may depend on the edges of the graph, and on the values of other buyers.
\end{definition}

\begin{lemma}\label{lemma:crititem}
    Prompt deterministic truthful mechanisms for the problem with public graph edges satisfy the \crititemproperty.
\end{lemma}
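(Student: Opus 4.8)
The plan is to argue by contradiction: suppose some prompt deterministic truthful mechanism $\M$ fails the critical item property for some buyer $i$, so that there exist two reported values $b_i < b_i'$ for which $i$ is assigned distinct items $j \neq j'$ (with the other buyers' strategies and the item sequence fixed). The key feature I want to exploit is \emph{promptness combined with the online structure}: when $i$ reports $b_i$, the payment $\payment{i}$ is locked in at the moment item $j$ is allocated, using only information available up to time $j$; likewise for $b_i'$ and $j'$. Without loss of generality say $j$ arrives before $j'$ (the case $j'$ before $j$ is symmetric). I would then consider the run where $i$ reports $b_i'$: along this run, at time $j$ the mechanism did \emph{not} give $i$ item $j$ (it eventually gives $j'$), yet up to time $j$ the mechanism's view of $i$ under $b_i'$ and under $b_i$ could have been made to look identical at the crucial decision point — here I need to invoke single-parameter monotonicity / Myerson-type reasoning, since this is a single-parameter setting.

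More concretely, I would first establish a \textbf{monotonicity} statement for the allocation: if a prompt truthful mechanism assigns buyer $i$ some item when she reports $b_i$, then she is assigned \emph{some} item (possibly a different one) when she reports any $b_i' \geq b_i$ — this is the standard consequence of truthfulness in a single-parameter domain, applied to the indicator ``$i$ gets matched.'' Then the critical value $\critprice{i}{}$ (the infimum of reports for which $i$ is matched) is well-defined, and by promptness the payment must equal this critical value and be \emph{fixed at the round in which $i$ is first matched}. The heart of the argument is to show that the identity of the item $i$ receives cannot change as $b_i$ ranges over the interval $(\critprice{i}{}, \infty)$ where $i$ is matched. Suppose it does, with a ``switch'' from $j$ to $j'$ as $b_i$ crosses some threshold $t$; take $b_i$ slightly below $t$ (gets $j$) and $b_i'$ slightly above $t$ (gets $j'$). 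If $j$ precedes $j'$: under $b_i'$, the run must agree with the $b_i$-run up to and including time $j$ on everything the mechanism can see (by an exchange / coupling argument, or simply because the mechanism is deterministic and the inputs up to time $j$ differ only in $i$'s value, which did not cause a match at time $j$ under $b_i'$ but did under $b_i$ — contradicting monotonicity-in-the-round or giving $i$ a profitable deviation). I would push this to a contradiction with truthfulness: the buyer reporting the value that gets the \emph{earlier} item $j$ strictly prefers to, since her utility is $v_i - \critprice{i}{}$ either way (same critical value) but — and this is the subtle point — actually both give the same utility $v_i-\critprice{i}{}$, so the contradiction must instead come from the \emph{mechanism side}, namely that a prompt deterministic mechanism cannot commit to allocating $j$ at time $j$ for report $b_i$ and \emph{not} allocate anything to $i$ at time $j$ for report $b_i'>b_i$ while later matching $i$ to $j'$ — because at time $j$, reports $b_i$ and $b_i'$ that are ``both above the critical value'' are indistinguishable to the mechanism in terms of the matched/unmatched decision for $i$, forcing the same round-$j$ action.

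\textbf{Anticipated main obstacle.} The delicate step is turning ``indistinguishable at time $j$'' into a rigorous statement: a priori the mechanism \emph{can} act differently at time $j$ for $b_i$ versus $b_i'$, since these are different inputs. The leverage must be that truthfulness forces the round-$j$ matching decision for $i$ to depend on $b_i$ only through whether $b_i$ exceeds an item-$j$-specific critical price $\critprice{i}{j}$ (a ``local'' Myerson argument restricted to the decision made at round $j$), and that promptness forbids revisiting this decision later. So the real content is: (i) a per-round single-parameter monotonicity giving critical thresholds $\critprice{i}{j}$ at each round where $i$ could be matched, and (ii) showing the \emph{global} critical value equals $\min_j \critprice{i}{j}$ over rounds $i$ could be matched, with the item received being the $\arg\min$ — which is exactly the critical item. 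I would therefore structure the proof as: fix everything except $b_i$; define for each round $j$ the set of $b_i$-values for which $\M$ matches $i$ to $j$ at round $j$; show (by truthfulness + promptness) each such set is an up-set intersected with the complement of earlier up-sets, i.e. an interval $(\critprice{i}{j}, \infty)$ minus $\bigcup_{j'<j}(\critprice{i}{j'},\infty)$; conclude that the round in which $i$ gets matched, hence the item, is constant on the matched range, namely the round $j^\star$ achieving the smallest threshold. That round's item is the promised $j$ in \Cref{def:crititem}. The one genuinely tricky case to handle carefully is ties (multiple rounds achieving the same threshold) and the behavior exactly at the critical value, which I would dispatch by a standard limiting/tie-breaking argument, noting the definition allows ``or none at all.''
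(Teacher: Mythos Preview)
Your plan correctly isolates the crux---why should the round-$j$ matching decision be monotone in $b_i$?---but the resolution you offer (``a local Myerson argument restricted to the decision made at round $j$'') is not an argument. Myerson's lemma constrains the \emph{global} allocation; it says nothing about a single round, and indeed a mechanism can be globally truthful and prompt \emph{on a fixed instance} while matching $i$ to item $1$ for $b_i\in[1,2]$ and to item $2$ for $b_i>2$, charging the same price in both cases. What rules this out is an ingredient you name (``the online structure'') but never actually deploy: at time $j$ the mechanism cannot see future edges, so its time-$j$ decision must also be truthful for the \emph{alternate instance} in which $i$ is interested in nothing after $j$. This prefix-instance reduction is exactly what delivers the per-round threshold you want, and it is absent from your write-up. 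Your worry that ``both give the same utility $v_i-\critprice{i}{}$, so the contradiction must come from the mechanism side'' is a symptom of the same gap: once the alternate-instance step is in place, the two prompt prices are \emph{not} equal, and the contradiction is an ordinary buyer deviation.

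The paper's proof uses that single observation and nothing else. With report $\beta_1$ yielding item $j_1$ at prompt price $p_1$ and report $\beta_2$ yielding $j_2>j_1$ at price $p_2$: since at time $j_1$ the mechanism might be facing a buyer who likes nothing later, every report $\geq p_1$ must already receive $j_1$ (otherwise, in that alternate instance, such a buyer would deviate to $\beta_1$). Hence $\beta_2<p_1$, so by individual rationality $p_2\leq\beta_2<p_1$, and now a buyer with true value $\beta_1$ strictly prefers to report $\beta_2$. Your apparatus of per-round thresholds $\critprice{i}{j}$, $\arg\min$'s, interval decompositions, and tie-breaking is unnecessary once this step is in hand.
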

\begin{proof}
    For the sake of contradiction, assume that there is a buyer $i$ who gets item $j_1$ at price $p_1$ if she reports a value $\beta_1$ and gets item $j_2$ at price $p_2$ if she reports a value $\beta_2$. Without loss of generality, let $j_1 < j_2$. By truthfulness, the mechanism must give item $j_1$ to buyer $i$ if she reports a value $\geq p_1$ (as far as the mechanism knows, $i$ might not like items after $j_1$, and she would have incentive to lie and report $\beta_1$ if she is not given $j_1$). Thus, we have $p_2 \le \beta_2 < p_1$, where the first inequality comes from individual rationality. But now, buyer $i$ has incentive to report $\beta_2$, in order to get $j_2$ and pay $p_2$ which is less than $p_1$.
\end{proof}

\begin{theorem}
\label{thm:expost-det-lb}
    Any prompt deterministic truthful mechanism for the problem with public graph edges has competitive ratio of at most~$\nu = \min(m,n)$.
\end{theorem}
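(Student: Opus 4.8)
The plan is to exhibit an adaptive adversarial instance with $\nu$ buyers and $\nu$ items on which \emph{every} prompt deterministic truthful mechanism collects at most a $1/\nu$ fraction of the optimal welfare, so that its competitive ratio is at least $\nu$. The engine of the construction is \Cref{lemma:crititem}: the mechanism satisfies the critical item property, so each buyer $i$ is tied to a single ``critical item'' $j(i)$ — she is matched to $j(i)$ or to nothing, whatever value she reports. Before touching the construction I would record a sharper form of this, obtained from \Cref{lemma:crititem} together with single‑parameter (Myerson) monotonicity: $j(i)$ is the \emph{first} item in $i$'s desired set that $i$ can win under some report, and — because the mechanism is prompt — both $j(i)$ and the associated threshold price are determined already at the round in which that item arrives, using only the bids seen so far. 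In particular, whether the mechanism lets buyer $i$ win the item that is currently arriving is fixed at that round, independently of anything the adversary does later.

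The adversary then releases the items one at a time and, exploiting that the edges are revealed online, fixes the desired sets incident to item $i_j$ only at round $j$, as a function of the mechanism's behaviour so far. It first offers $i_1$ to all $\nu$ buyers and inspects the set $C_1$ of buyers for whom the mechanism makes $i_1$ winnable (the round winner $w_1 \in C_1$, and every other member of $C_1$ is henceforth stranded, since its critical item is $i_1$, which is gone). If $C_1$ is large — say all $\nu$ buyers — the adversary wins in one stroke: give $w_1$ the desired set $\{i_1\}$, give each of the $\nu-1$ stranded buyers the desired set $\{i_1,i_k\}$ for a private fresh item $i_k$, and release $i_2,\dots,i_\nu$; the mechanism cannot match any $i_k$ (its only interested buyer has critical item $i_1$), so its welfare is just that of $w_1$, whereas the optimal matching sends $w_1\to i_1$ and each stranded buyer to her private item, for welfare $\Theta(\nu)$. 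If instead $C_1$ is small (in the extreme, only $w_1$), the mechanism is being ``cautious''; the adversary then recurses on the buyers outside $C_1$ and the remaining items, while simultaneously attaching some of the already‑released items to buyers that are now locked elsewhere, so that these items become unmatchable for the mechanism yet the \emph{global} graph still admits a near‑perfect matching of near‑uniform weight scale. Geometrically increasing buyer values are used so that, by steering which buyer the mechanism commits to first, the same instance also defeats mechanisms that essentially ignore reported values (e.g.\ purely index‑based ones). The degenerate responses — the mechanism leaving an offered item unmatched, or never matching some offered buyer — only hurt the ratio and are absorbed into the same bound.

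The main obstacle is precisely the ``cautious'' branch and making the recursion compose. A mechanism that makes every arriving item winnable for a single buyer behaves like a greedy matching and, if the adversary simply kept feeding it fresh buyers, would match everyone and be $1$‑competitive; so after each commitment the adversary must retroactively ruin the committed buyer's future options (enlarge her desired set with later items she can no longer receive) while keeping enough edges that the optimum stays a near‑perfect matching of comparable value. Carrying this out so that it holds uniformly over all $\nu$ rounds, for every commitment pattern the mechanism might exhibit and robustly against value‑blind behaviour, is the technical heart of the proof; once the instance is pinned down, the welfare comparison and the handling of the degenerate cases are routine.
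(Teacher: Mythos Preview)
Your $C_1=[\nu]$ case is correct and is essentially the paper's argument in that branch. The gap is the other branch. You already recognise that a mechanism making each item winnable by exactly one buyer ``would match everyone and be $1$-competitive'' under your recursion, and you propose to repair this by ``retroactively ruin[ing] the committed buyer's future options''. But this repair cannot succeed. Take the mechanism that assigns item $j$ to buyer $j$ (if she is interested) at price $0$ and otherwise discards it; in the public-edges model values are the only private information and this mechanism ignores them, so it is prompt, deterministic and truthful. With your construction---offer each new item to all uncommitted buyers, and optionally also to already-committed ones---this mechanism has $C_j=\{j\}$ at every round and matches every buyer, so its welfare equals $\sum_i v_i=\opt$ for \emph{any} choice of values, geometric or otherwise. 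Enlarging a committed buyer's desired set with later items adds edges that are just as irrelevant to the mechanism as they are helpful to the optimum, so the ratio stays~$1$. (Separately, ``attaching already-released items to buyers'' is not well-defined here: the interested set for item $j$ is fixed when $j$ arrives.)

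The paper sidesteps the recursion entirely. The right dichotomy is not ``$C_1$ large vs.\ small'' but $C_1=[\nu]$ versus $C_1\neq[\nu]$, and the second case is a one-liner: pick any $i\notin C_1$, give her an arbitrarily large value, and make item $1$ her \emph{only} desired item. The allocation of item $1$ depends only on round-$1$ data, so $i$ still fails to win it and, being interested in nothing else, receives nothing; meanwhile $\opt\ge v_i$, so the competitive ratio is unbounded. Together with your (correct) argument for $C_1=[\nu]$, this finishes the proof after the first round---no recursion, no geometric values, no value-blind case analysis.
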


\begin{proof}
    Consider an instance with $n$ buyers with value $1$ that are all interested in the first item. If there is a buyer $i$, who will never get item $1$ no matter what she reports, then we change the instance so that $i$ has an arbitrary large value and is only interested in item $1$, in which case $i$ will get nothing and the mechanism does not even approximate the optimal social welfare. Conversely, if there is no such buyer, then the \crititemproperty~ states that no other item can be allocated, which gives an approximation ratio of $\min(m,n)$. 
\end{proof}

\subsection{Prompt randomized truthful mechanisms}
    
    Somewhat surprisingly, the previous section has revealed a very large gap
    between tardy and prompt deterministic mechanisms, when the topology of the graph is public knowledge: while tardy mechanisms can be implemented {\em for free}, i.e., maintaining the efficiency guarantees of (non-strategic) combinatorial algorithms, for prompt mechanisms the story is different. After showing that deterministic mechanisms cannot achieve anything better than $\nu$, we now turn our focus towards impossibility results for randomized mechanisms. We utilize a well-known property of randomized truthful mechanisms, which (by definition) make truthful reports utility-maximizing for any outcome of a mechanism's random decisions, even in hindsight: this implies that they are in fact lotteries over deterministic truthful mechanisms, which in turn satisfy the characterizing properties shown in the previous section.

\begin{theorem}
\label{thm:expost-rand-lb}
     Any prompt randomized truthful mechanism for the problem with public graph edges has competitive ratio of at least $\Omega(\log \nu/\log\log \nu)$.
\end{theorem}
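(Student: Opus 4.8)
exhibit a distribution over instances such that every deterministic truthful prompt mechanism has expected competitive ratio $\Omega(\log\nu / \log\log\nu)$ on it. By \Cref{lemma:crititem}, any such deterministic mechanism satisfies the critical item property, so the analysis of the random mechanism reduces to reasoning about deterministic ones that, for each buyer, commit to a single ``critical item'' (depending on the other reports and the public graph), allocating the buyer only that item or nothing.

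\textbf{The hard instance.} I would take $k = \Theta(\log\nu/\log\log\nu)$ and build an instance on roughly $k$ items and $k$ buyers (padding with dummies so that $\nu$ is as claimed). Assign buyer $t$ a value $v_t = C^{t}$ for a large constant base $C$, so that valuations grow exponentially; make buyer $t$ interested in items $1, 2, \dots, t$ (a nested interval structure, reminiscent of \cite{cole2008prompt}), and then present the items in the order $1, 2, \dots, k$. The optimal offline matching assigns item $t$ to buyer $t$, collecting welfare $\sum_t C^t = \Theta(C^k)$. Randomness enters by drawing which buyers are actually ``active'' (the rest report value $0$): e.g. keep buyer $k$ always active and include each smaller buyer independently, or — better for a clean primal-dual bound — choose a random threshold $\tau$ and activate exactly buyers $\tau, \tau+1, \dots, k$. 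The point is that the mechanism, being prompt and online, must decide what to do with item $j$ before learning whether the high-value buyers that are still interested in later items are active.

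\textbf{The core argument (primal–dual).} Fix a deterministic truthful mechanism; by the critical item property each active buyer $t$ has a critical item $j(t)$. I would argue: because prices must be monotone along the arrival order (the engine behind \Cref{lemma:crititem} and \Cref{thm:expost-det-lb}), the mechanism essentially cannot ``save'' a high-value buyer for a late item without risking that the buyer's critical item is early and gets consumed — so on a typical draw of the threshold $\tau$, the mechanism collects welfare dominated by a single term $C^{t^\star}$ for some $t^\star$ determined (in distribution) by the mechanism's commitments, while the optimum is $\Theta(C^k) \approx C^{\tau} \cdot \frac{C^{k-\tau+1}-1}{C-1}$. Setting up a feasible dual solution for the offline matching LP on each realized instance — charging the value of each buyer to her own critical item's ``slot'' — I would show the expected ratio between $\opt$ and the mechanism's expected welfare is at least $\Omega(k)$. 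Choosing $C = \Theta(\log\nu)$-ish and $k$ accordingly so that $C^k \le \nu$ gives $k = \Theta(\log\nu/\log\log\nu)$, as claimed.

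\textbf{Main obstacle.} The delicate part is turning ``promptness + critical item property + price monotonicity'' into a quantitative statement that a deterministic mechanism can capture at most (a constant factor times) one of the exponentially-spaced values in expectation over the random threshold — ruling out clever adaptive schemes that hedge across several buyers. I expect to handle this by a charging/bucketing argument: partition buyers into $\Theta(k)$ value-scales, observe that serving a buyer at scale $s$ forecloses (via the already-assigned critical item and monotone prices) the ability to also serve the buyers the adversary will reveal at higher scales on a constant fraction of threshold draws, and then sum a geometric series so that the contribution of all-but-the-top scale is negligible. Making the probabilistic quantifiers line up — ``for each deterministic mechanism, most thresholds are bad'' versus ``for each threshold, most mechanisms are bad'' — is exactly where Yao's principle is invoked, and where I would be most careful. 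The almost-matching $\mathcal{O}(\log\nu)$ upper bound is, as the text notes, inherited from \Cref{thm:expost-rand-ub}.
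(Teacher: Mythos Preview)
Your high-level plan (Yao's principle plus the critical item property from \Cref{lemma:crititem}) matches the paper, but your hard-instance construction does not work and the gap is not a detail you can patch with a charging argument.

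With the graph you describe --- buyer~$t$ interested in items $1,\dots,t$, items arriving in order $1,\dots,k$, and only the \emph{values} randomized via a threshold~$\tau$ --- the deterministic mechanism ``assign item~$t$ to the buyer with ID~$t$ at price~$0$'' is prompt, (trivially) truthful, satisfies the critical item property with $\crititem{t}=t$, and recovers the full optimum on every realization of~$\tau$. The mechanism never needs to learn who is active: each buyer's designated item is determined by her public identity and the public edge set, neither of which you randomize. Your sentence ``the mechanism \dots must decide what to do with item~$j$ before learning whether the high-value buyers \dots are active'' is therefore false here --- all bids arrive with item~$1$ anyway, and more importantly the allocation does not need them. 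The same defeat applies to your alternative of including each smaller buyer independently; the graph is still fixed.

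The idea you are missing is that the randomness must hit the \emph{graph}, and must be \emph{coupled} to the value in a way the mechanism cannot untangle without the buyer's own bid. The paper takes $n$ buyers and $k=\Theta(\log n)$ types; each buyer independently draws a type~$t(i)$ with probability $\beta_{t(i)}\propto 2^{-t(i)}$, gets value $1/\beta_{t(i)}$, and is placed in a nested structure according to her rank~$\sigma(i)$ in the sorted-by-type order. Now $\sigma(i)$ --- and hence the ``correct'' item for~$i$ --- depends on $t(i)$, which the mechanism cannot use when fixing the critical item $\crititem{i}$ (by \Cref{lemma:crititem}). The heart of the argument is a combinatorial fact specific to this coupling: conditioned on the other buyers' types, the \emph{type of the item} at position $\crititem{i}$ can take at most $2$ values plus the number of unseen types, which is $O(1)$ in expectation. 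That single inequality becomes the binding constraint in a small linear program (not the matching LP you allude to), whose dual has value $O(\log k)=O(\log\log n)$, while the expected optimum is $nk=\Theta(n\log n)$.

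So the primal--dual step is real, but it lives on an LP whose variables are allocation probabilities $x_{s,t}$ (buyer of type~$t$ gets item of type~$s$) with constraints coming from (i) item-supply, and (ii) the ``critical item has $O(1)$ possible types'' bound; it is not a dual of the offline matching LP, and your intuition about ``monotone prices foreclosing higher scales'' does not correspond to any constraint that actually holds for the mechanism above.
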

\begin{proof}
    Fix any prompt randomized ex-post truthful mechanism for public graph edges. We are going to argue by Yao's principle \citep{Yao77} that its competitive ratio is at least $\Omega(\log \nu/\log\log \nu)$. This holds due to the upcoming \Cref{lemma:distrib}, which shows that there exists a distribution over instances, such that the optimal solutions have expected welfare $\Omega(\log n)$, and a best-possible deterministic mechanism $\M$, since it satisfies the critical item property, outputs solutions with expected value $\mathcal O(\log\log n)$. 
\end{proof}

\newcommand{\stickfigure}[2]{
    \begin{scope}[scale=.2, shift={(5*#1,5*#2-0.65)}]
        \fill[white] (0,1.3) ellipse (0.3 and 0.2);
        \fill[white] (-.3,0.3) rectangle (.3,1);
        \draw (-.5,0)--(-.3,0)--(-.3,1)--(.3,1)--(.3,0)--(.5,0);
        \draw (-.3,0.3)--(0.3,0.3);
        \draw (-.3,1) -- (-.5,0.8) -- (-.3,0.5);
        \draw (0.3,1) -- (.5,0.8) -- (.3,0.5);
        \fill[white] (0,1.3) ellipse (0.3 and 0.2);
        \draw (0,1.3) ellipse (0.3 and 0.2);
    \end{scope}
}
\begin{figure}[t]
    \centering
    \begin{tikzpicture}[scale=1.5,thick]
    \def\n{9}
    \fill[rounded corners,black!10] (-.2,-1.2) rectangle (2.2,-.3);
    \fill[rounded corners,black!10] (-.2,-2.7) rectangle (2.2,-1.3);
    \fill[rounded corners,black!10] (-.2,-4.7) rectangle (2.2,-2.8);
    \node[rotate=-90] at (2.4,-0.75) {$\overbrace{\hspace{1cm}}$};
    \node[rotate=-90] at (2.4,-2) {$\overbrace{\hspace{1.75cm}}$};
    \node[rotate=-90] at (2.4,-3.75) {$\overbrace{\hspace{2.5cm}}$};
    \node[anchor=west] at (2.5,-.5) {Buyers/items of type $3$:};
    \node[anchor=west] at (2.6,-.75) {expected number};
    \node[anchor=west] at (2.6,-1) {$=n\cdot\beta_3 = 9\cdot 1/7$};
    \node[anchor=west] at (2.5,-1.75) {Buyers/items of type $2$:};
    \node[anchor=west] at (2.6,-2) {expected number};
    \node[anchor=west] at (2.6,-2.25) {$=n\cdot\beta_2 = 9\cdot 2/7$};
    \node[anchor=west] at (2.5,-3.5) {Buyers/items of type $1$:};
    \node[anchor=west] at (2.6,-3.75) {expected number};
    \node[anchor=west] at (2.6,-4) {$=n\cdot\beta_3 = 9\cdot 4/7$};
    \foreach \i in {1,...,\n}
        \foreach \j in {1,...,\i}
            \draw[thin] (0.2,-\i/2) -- (2,-\j/2);
    
    \foreach \i in {1, ..., \n} {
        \fill[white]  (2,-\i/2) circle (0.1);
        \draw  (2,-\i/2) circle (0.1);
        \node at (2,-\i/2) {\tiny $\i$};
    }
    \foreach \i/\j/\t in {1/2/3,2/9/3,
    3/1/2,4/3/2,5/5/2,
    6/4/1,7/6/1,8/7/1,9/8/1}
    {
        \def\a{\pgfmathparse{int(pow(2,3-\t))}\pgfmathresult}
        \stickfigure{0}{-\i/2}
        \node at (0,-\i/2) {\tiny $\j$};
        \node at (-3,-\i/2) {$\j$};
        \node at (-2.5,-\i/2) {$\i$};
        \node at (-2,-\i/2) {$\t$};
        \node at (-1,-\i/2) {$1/\beta_\t = 7/\a$};
    }
    \node at (-3,0) {$i$};
    \node at (-2.5,0) {$\sigma(i)$};
    \node at (-2,0) {$t(i)$};
    \node at (-1,0) {value $v_i$};
    \end{tikzpicture}
    \caption{\footnotesize The instance from \Cref{lemma:distrib} with $k=3$ and $n=9$. Items are ordered (from top to bottom) according to their arrival times, and buyers are ordered (from top to bottom) according to $\sigma$ (sort by decreasing types, breaking ties with indices). Preferences of buyers are given by the edges of the graph.}
    \label{fig:analysistight}
    \vspace{-20pt}
\end{figure}
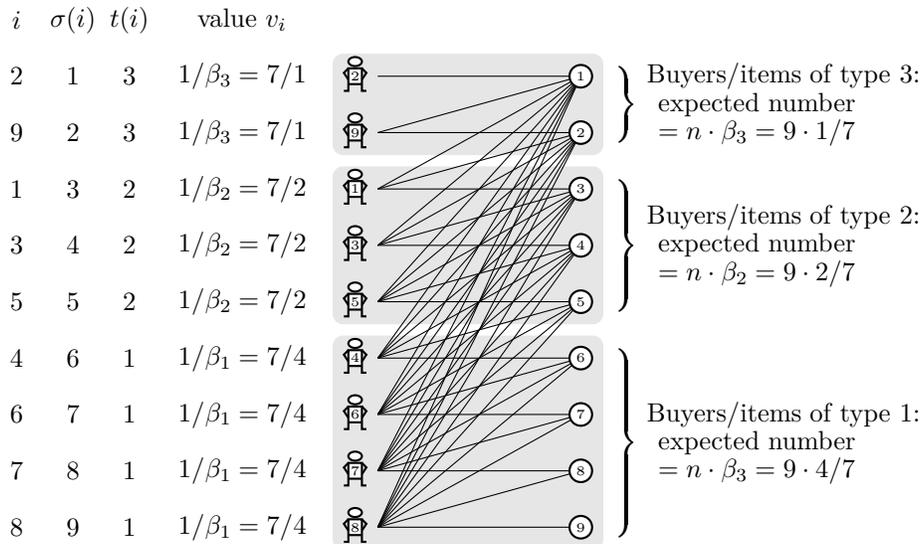

\begin{lemma}\label{lemma:distrib}
    There is a distribution over instances with $n$ buyers and $n$ items, for which optimal solutions have expected value $\Omega(n \log n)$, whereas any deterministic mechanism satisfying the critical item property outputs solutions whose expected value is $\mathcal O(n \log\log n)$.
\end{lemma}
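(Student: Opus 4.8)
The plan is to construct a random instance parametrized by an integer $k = \Theta(\log n / \log\log n)$, with $n$ buyers and $n$ items organized in $k$ ``types.'' A buyer of type $t$ has value $v_i = 1/\beta_t$ where $\beta_t$ decreases geometrically in $t$ (e.g. $\beta_t \propto 2^{k-t}$, normalized so the $\beta_t$ sum to $1$); the expected number of buyers of type $t$ is $n\beta_t$. As suggested by \Cref{fig:analysistight}, the edge structure is nested: buyer $i$ (when buyers are sorted by decreasing type via the permutation $\sigma$) is interested in items $1, 2, \dots, \sigma(i)$, so higher-type (higher-value) buyers are interested in fewer items, all of which are early-arriving. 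The randomness is over the assignment of types to buyers and/or the arrival order of items, set up so that in expectation the welfare-optimal matching can satisfy almost everyone: $\opt$ assigns each buyer one of her desired early items, giving expected value roughly $\sum_t n\beta_t \cdot (1/\beta_t) = nk = \Omega(n\log n/\log\log n)$. I would verify the easy direction first: that this nested instance admits a perfect (or near-perfect) matching with high probability, so $\E{\opt} = \Omega(nk)$.

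**Exploiting the critical item property.** The heart of the argument is bounding any deterministic mechanism $\M$ satisfying the \crititemproperty. The key point is that $\M$ must commit, for each buyer $i$, to a single ``critical item'' $\crititem{i}$ that is the \emph{only} item $i$ can ever receive — and this commitment depends only on the graph and the \emph{other} buyers' values, not on $i$'s own value. Since item $\crititem{i}$ arrives at some fixed time, and all items in this instance arrive in a known order, $\M$ is essentially forced to decide very early which buyer (if any) gets each early item. But the early items are precisely the ones that \emph{all} high-value buyers want. When $\M$ commits item $1$ (say) to some buyer $i$, it cannot know $i$'s type — with the type distribution, $i$ is a low-type (low-value) buyer with constant probability, and then the welfare contribution is only $1/\beta_1 = \Theta(1)$ rather than the $\Theta(2^{k})$ a top-type buyer would contribute. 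I would formalize this with a charging/potential argument: partition items into the $k$ ``blocks'' by arrival time as in the figure, and show that on block $\ell$ the mechanism collects expected welfare only $O(\log\log n)$ per block in the worst case, or better, that the \emph{total} expected welfare telescopes to $O(n\log\log n)$. The cleanest route is probably a primal-dual / LP-duality accounting (hinted at in the introduction): assign dual variables to buyers and items, use the critical-item commitments to bound the primal objective $\E{\sw(\mu)}$ against a feasible dual of value $O(n\log\log n)$, exploiting that each committed early item ``blocks'' an entire geometrically-large family of would-be high-value assignments.

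**The main obstacle.** I expect the technical crux to be the precise probabilistic accounting showing that the mechanism's expected welfare per ``level'' of the nested structure is only $O(\log\log n)$ rather than $\Theta(\log n / \log\log n)$. The subtlety is that the mechanism is adaptive in the other buyers' reported values and in its own past random-free choices, so one cannot simply say ``item $1$ goes to a uniformly random buyer.'' Instead I would condition on the realized type-assignment and argue that, because the critical item $\crititem{i}$ for each buyer is fixed before $i$'s value is seen, the mechanism faces an information-theoretic bottleneck: to collect welfare $\Omega(2^{k-t})$ from the type-$t$ buyers it must have committed their (few, early) critical items to type-$t$ buyers specifically, but those same early items are the critical items contended for by types $t+1, \dots, k$ as well, and there are geometrically few of them. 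Making this ``you can serve at most one geometric level well per early item'' intuition rigorous — handling the adaptivity and the fact that $\M$ may choose $\crititem{i} = \emptyset$ for some buyers — is where the real work lies; I would isolate it as a combinatorial lemma about fractional assignments respecting the nested edge structure and the per-buyer single-item constraint, then take expectations. Finally, choosing $k = \Theta(\log n/\log\log n)$ balances $\E{\opt}/\E{\sw(\mu)} = \Omega(nk) / O(n\log\log n) = \Omega(k) = \Omega(\log n/\log\log n)$, which combined with Yao's principle in \Cref{thm:expost-rand-lb} gives the claimed bound (after rescaling $n\log\log n$ versus $\log\log n$ per the lemma's normalization).
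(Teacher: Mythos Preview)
Your framework matches the paper's: a distribution with $k$ types of geometrically decreasing probability $\beta_t$ and the nested edge structure of \Cref{fig:analysistight} is exactly right, and indeed $\E{\opt} = nk$ via the perfect matching $i \mapsto \sigma(i)$. However, two concrete things are off.

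First, the parameter choice. You take $k = \Theta(\log n/\log\log n)$, which gives $\E{\opt} = \Theta(n\log n/\log\log n)$ --- not the $\Omega(n\log n)$ the lemma claims. The paper takes $k = \Theta(\log n)$ (specifically $n = 1+2^k$); this is the largest $k$ for which every type still appears among the other $n-1$ buyers with high probability, a fact that is needed below.

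Second, and more importantly, the upper bound on $\E{\sw(\mu)}$ is not the vague ``information-theoretic bottleneck'' or per-block telescoping you describe, and your claim that the mechanism ``cannot know $i$'s type'' when committing item $1$ is wrong (it sees all bids). The decisive observation is different. Fix buyer $i$ and condition on all other buyers' types. Because $i$'s edge set only shrinks as her type grows, one can take $j(i)$ to be her critical item when she has type~$1$ (maximal edges) and argue that for any other type she either receives this same $j(i)$ or nothing --- you gloss over this step, which is nontrivial since in this instance $i$'s type determines her edges, not just her value. Now the \emph{type} $s(j(i))$ of item $j(i)$ is, by definition, the type of the buyer occupying rank $j(i)$ in the ordering $\sigma$. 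As $t(i)$ varies, buyer $i$'s own position in $\sigma$ moves, shifting ranks by at most one; hence $s(j(i))$ is always sandwiched between the types at ranks $j(i)-1$ and $j(i)$ of the ordering without~$i$. Thus $s(j(i))$ can take at most $2$ values, plus the number of types absent among the other $n-1$ buyers --- in expectation at most~$3$ once $k \approx \log n$. This yields the crucial linear constraint $\sum_{s=1}^k y_s \le \alpha = O(1)$ (with $y_s$ the probability that $j(i)$ could have type $s$), which together with the item-budget constraints $\sum_{t\le s}\beta_t x_{s,t}\le \beta_s$ is what makes the LP-dual bound come out to $O(\alpha\log k) = O(\log\log n)$ per buyer. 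Without this ``$j(i)$ has only $O(1)$ possible types'' step, your LP would be missing its tightest constraint and the dual would not close.
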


\begin{proof}
Let $k \geq 1$ be a parameter, which corresponds to the number of types of buyers, and let $\beta_1 > \dots > \beta_k > 0$ be the probabilities of each type ($\beta_1+\dots+\beta_k = 1$). Consider the following distribution over instances, with $n$ buyers and $n$ items. 
Each buyer $i$ draws independently a type $t(i) \in \{1, \dots, k\}$ with probability $\beta_{t(i)}$, and we set her value to  $v_i = 1/\beta_{t(i)}$. Then, we sort buyers by decreasing $t(i)$, breaking ties using indices, and call $\sigma(i) \in \{1, \dots, n\}$ the rank of buyer $i$ in this ordering. We decide that buyer $i$ is interested in all items up to the $\sigma(i)$-th one. To visualize this procedure, we refer to \Cref{fig:analysistight}. It is easy to find the optimal allocation: it consists in assigning each buyer of rank $\sigma(i)$ the $\sigma(i)$-th item, in a perfect matching. Thus the expected optimal social welfare is equal to
\[
    \E{\opt} = \sum_{i=1}^n\sum_{t=1}^k \beta_t \cdot 1/\beta_t = n\cdot k.
\]
We now define the type $s(j) = t(\sigma^{-1}(j))$ of an item $j$ as the type of the $j$-th buyer in the ordering~$\sigma$, which corresponds to the type of its buyer in the abovementioned optimal matching. Observe that of each type, there are as many items as buyers, and that buyer $i$ cannot be allocated an item $j$ of type $s(j)<t(i)$.
For each buyer $i$ and for all types $t \leq s$, let $x^i_{s,t}$ be the probability (over the randomness of the types of all buyers except $i$) that $i$ gets an item of type $s$, conditioning on the fact that $i$ has type $t$.
Let $x_{s,t} = \sum_i x^i_{s,t}/n$, that is, the average probability that a type $t$ buyer will be assigned a type $s$ item. The expected social welfare of our deterministic mechanism is equal to
\[
    \E{\sw(\mu)} = \sum_{i=1}^n \sum_{t=1}^k \beta_t \cdot 1/\beta_t \cdot \sum_{s=t}^k x^{i}_{s,t} = n \sum_{t=1}^k \sum_{s=t}^k x_{s,t}.
\]
In expectation, the mechanism sells $\sum_i\sum_{t=1}^s \beta_t \cdot x^{i}_{s,t}$ items of type $s$. Because there are equally many items and buyers of each type, the expected number of items of type $s$ is $\beta_s \cdot n$. Thus, we have the linear constraint 
\[
    \forall 1 \leq s \leq k,\qquad \sum_{t=1}^s \beta_t \cdot x_{s,t} \leq \beta_s.
\]
We are now going to use the \crititemproperty. Fix a buyer $i$, and condition on the types of all buyers except her. We show that there exists an item $\crititem{i} \in \{1, \dots, n\}$, such that for every type $t(i)$, either $i$ gets item $\crititem{i}$, or she gets nothing.
Denote  as $I_t$ the instance given by the fixed types of all buyers except $i$, together with buyer $i$ who has type $t$. Using the \crititemproperty~with instance $I_1$, where $i$ instead is of type $1$ (meaning that $i$ is interested in maximally many items), there is an item $\crititem{i}$ such that buyer $i$ either gets $\crititem{i}$ or nothing. From the perspective of the mechanism, any other instance $I_t$ (defined analogously) is identical to instance $I_1$ up to the point when $i$ stops being interested in items. At this point, if buyer $i$ has already been allocated an item, then it must be $\crititem{i}$. Otherwise, she will not get anything.

Now that $\crititem{i}$ is well-defined (and only depends on types of other buyers), let $y^{i}_{s}$ be the probability (over the randomness of the types of all buyers except $i$) that there exists some type $t$ such that if $t$ is the type of $i$, then item $\crititem{i}$ has type $s$.
Let $y_s = \sum_i y^{i}_s/n$. Because buyer $i$ can only get item $\crititem{i}$, and because $\crititem{i}$ is independent from $t(i)$, we have $x^i_{s,t} \leq y^i_s$. Thus, summing over all buyers, we have the linear constraint $x_{s,t} \leq y_s$, for all $ 1\leq t \leq s \leq k$.
Finally, conditioning on the types of all buyers expect $i$, we show that there is only a small number of types that $\crititem{i}$ can take. Recall that $s(\crititem{i}) = t(\sigma^{-1}(\crititem{i}))$, that is, the type of item $\crititem{i}$ is by definition the type of the $\crititem{i}$-th buyer in the ordering $\sigma$,
where $\sigma$ was obtained by sorting buyers in decreasing order of type.
Consider the ordering induced by $\sigma$ after excluding buyer $i$, and denote $i_1$ and $i_2$ the buyers of rank $\crititem{i}-1$ and $\crititem{i}$. In the original ordering $\sigma$, either $i$ comes before $i_1$ (in which case $s(\crititem{i}) = t(i_1)$), or $i$ comes after $i_2$ (in which case $s(\crititem{i}) = t(i_2)$), or $i$ comes between $i_1$ and $i_2$ (in which case $s(\crititem{i}) = t(i)$). In any case, $t(i_1) \geq s(\crititem{i}) \geq t(i_2)$. This shows that there are at most 2+$z$ possible values for $s(\crititem{i})$, where $z$ denotes the number of types not seen among other buyers.
By a standard computation, the expected value of $z$ is smaller than $\sum_{t=1}^k (1-\beta_t)^{n-1}$. 
Recall that $y_s$ denotes the average probability over $i$ that there exists a type for $i$ which can make $j(i)$ have type $s$, where the randomness is over the instance without $i$. Since for every \emph{fixed} such instance, $j(i)$ can only possibly take two of the types seen in buyers except $i$, for any fixed $i$, it holds 
\[
    \sum_{s=1}^k y_s^i \leq \alpha\qquad\text{where}\quad \alpha = 2+\sum_{t=1}^k (1-\beta_t)^{n-1},
\] 
and therefore, the same holds also on average, i.e. for the $y_s$. 
Thus, averaging over possible types for the other buyers, and summing over $i$, we have the linear constraint $    \sum_{s=1}^k y_s \leq \alpha$. If we choose $n=1+2^k$ and $\beta_t = 2^{-t}/(1-2^{-k})$, we have
\[
\sum_{t=1}^k (1-\beta_t)^{n-1}\leq \sum_{t=1}^k e^{-2^{k-t}/(1-2^{-k})} \leq \sum_{t=0}^{+\infty} e^{-2^t} \leq 1,
\]
and thus $\alpha \leq 3$.
To conclude the proof, we use the linear constraints obtained to define a linear program (P) whose objective function is the expected value of the social welfare obtained by a deterministic truthful mechanism. We want to show that the objective function of our linear program is at most $\mathcal O(n \log k)$. To this end, \Cref{lemma:lp} builds a solution for the dual linear program (D), whose value is an upper bound on the value of the primal linear program (for convenience, the objective function is divided by $n$).
\end{proof}

\begin{figure*}[t]
\centering
\noindent\begin{minipage}{.3\textwidth}
\begin{align*}
  \max& \sum_{t=1}^k\sum_{s=t}^k x_{s,t}\tag{P}\\
  \text{s.t. }& x_{s,t} \leq y_s\\
  &{\textstyle \sum_{t=1}^s} \beta_t \cdot x_{s,t} \leq \beta_s\\
  &{\textstyle \sum_{s=1}^k} y_s \leq \alpha\\
  & x_{s,t},y_s \geq 0
\end{align*}
\end{minipage}
\hspace{25pt}
\begin{minipage}{.3\textwidth}
\begin{align*}
  \min & \ \alpha\cdot w+\sum_{s=1}^k \beta_s\cdot v_s \tag{D}\\
  \text{s.t. } & u_{s,t}+\beta_t \cdot v_s \geq 1\\
  & w\geq {\textstyle \sum_{s=1}^t} u_{s,t}\\
  & u_{s,t}, v_s, w \ge 0\\
\end{align*}
\end{minipage}
\vspace{-15pt}
\end{figure*}

\begin{lemma}\label{lemma:lp}
Consider the linear program (P), parameterized by $\alpha > 0$ and $\beta_1 > \dots > \beta_k > 0$. If $\beta_{t} = 2^{-t}/(1-2^{-k})$ for all $1 \leq t \leq k$, then the dual (B) has a feasible solution of value $\mathcal O(\alpha \log k)$.
\end{lemma}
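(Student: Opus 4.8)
I will prove the lemma by exhibiting an explicit feasible solution of the dual (D) and bounding its objective value; strong LP duality is not even needed, since a feasible dual solution of value $V$ is by itself a certificate that (D) has optimum at most $V$. Write $c := 1/(1-2^{-k})$, so that $\beta_t = c\,2^{-t}$ with $1\le c\le 2$. Recall that the dual variables are $w\ge 0$ (dual of $\sum_s y_s\le\alpha$), $v_s\ge 0$ for $s\in\{1,\dots,k\}$ (dual of the per-$s$ budget constraints), and $u_{s,t}\ge 0$ for $t\le s$ (dual of $x_{s,t}\le y_s$); they must satisfy $u_{s,t}+\beta_t v_s\ge 1$ for all $t\le s$ and $\sum_{t=1}^s u_{s,t}\le w$ for every $s$, and we want $\alpha w+\sum_{s=1}^k\beta_s v_s$ to be $\mathcal O(\alpha\log k)$.

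I would first note why the naive choices are too weak, since this guides the construction: $v_s\equiv 0$ forces $u_{s,t}=1$ and hence $w\ge k$, giving value $\Theta(\alpha k)$; while $v_s = 1/\beta_s$ makes every $u_{s,t}=0$ and $w=0$, but then $\sum_s\beta_s v_s=k$ — again $\Theta(k)$. The right tradeoff is \emph{geometric} growth of $v_s$. Fix the cutoff $W := \max\{1,\lceil\log_2 k\rceil\}$ (note $1\le W\le k$, and $k=1$ is handled by $W=1$). Set $w := W$; for $s<W$ set $v_s := 0$ and $u_{s,t} := 1$ for all $t\le s$; and for $s\ge W$ set
$v_s := 2^{\,s-W}/c$ and $u_{s,t} := \max\{0,\,1-\beta_t v_s\} = \max\{0,\,1-2^{\,s-W-t}\}$ for $t\le s$.

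Then I would verify feasibility, which reduces to short telescoping computations. The constraint $u_{s,t}+\beta_t v_s\ge 1$ holds by construction (for $s<W$ we have $v_s=0$ and $u_{s,t}=1$; for $s\ge W$, $u_{s,t}=\max\{0,1-\beta_tv_s\}$). Nonnegativity is immediate. For $\sum_{t=1}^s u_{s,t}\le w=W$: if $s<W$ the sum is $s<W$; if $s\ge W$, then $u_{s,t}>0$ exactly for $t\in\{s-W+1,\dots,s\}$ (since $1-2^{s-W-t}>0\iff t>s-W$), so the sum equals $\sum_{j=1}^{W}(1-2^{-j}) = W-1+2^{-W}<W$. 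Finally, the objective: $\sum_{s=1}^k\beta_s v_s = \sum_{s=W}^k (c\,2^{-s})(2^{\,s-W}/c) = (k-W+1)\,2^{-W}\le k\,2^{-W}\le 1$ by the choice of $W$, whence the dual value is $\alpha W + (k-W+1)2^{-W}\le \alpha(\log_2 k + 1) + 1 = \mathcal O(\alpha\log k)$, as claimed.

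The \emph{main obstacle} is purely in finding the right scaling, i.e. realizing that $v_s\sim 2^{s}$ is the correct choice: growing the $v_s$ geometrically makes only the top $\approx\log_2 k$ of the constraints $u_{s,t}+\beta_t v_s\ge 1$ binding for each $s$ (so that $w=\mathcal O(\log k)$ suffices), while the geometric decay of the $\beta_s$ exactly cancels the growth of the $v_s$ in the objective (so that $\sum_s\beta_s v_s=\mathcal O(1)$); balancing these two effects is the whole content, and once the form of the solution is guessed all verifications are elementary. As a sanity check one can instead bound the primal (P) directly: for a fixed $y$ the per-$s$ subproblem is a fractional knapsack with item weights $2^{s-t}\ge 1$, whose value is at most $\min\{1,\ \mathcal O(y\log(e/y))\}$; this function is concave and vanishes at $0$, so with $\sum_s y_s\le\alpha$ one gets $\sum_s f_s(y_s)\le k\cdot\mathcal O\!\big((\alpha/k)\log(ek/\alpha)\big)=\mathcal O(\alpha\log k)$, matching the dual bound — but the explicit dual certificate above is the cleaner route and is exactly what the lemma asks for.
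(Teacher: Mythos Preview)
Your proposal is correct and takes essentially the same approach as the paper: the paper sets $\delta=\lceil\log_2 k\rceil$, $w=\delta$, $v_s=2^{s-\delta}$ for $s\ge\delta$ (and $0$ otherwise), and $u_{s,t}=\max\{0,1-2^{s-\delta-t}\}$, which is your solution up to the harmless normalization by $c$ and your explicit handling of $k=1$ via $W=\max\{1,\lceil\log_2 k\rceil\}$. Your added motivation (why $v\equiv0$ and $v_s=1/\beta_s$ both fail) and the primal sanity check are nice extras but not needed for the argument.
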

\begin{proof} Set $\delta = \lceil\log_2 k\rceil$, then following solution of the dual is feasible and yields the desired objective value: $w=\delta$, $v_s = 0$ if $s < \delta$ and $2^{s-\delta}$ otherwise, while the $u_{s,t}$ are defined as:
\begin{align*}
\forall 1 \leq t \leq s \leq k, \quad
u_{s,t} &= \left\{\begin{array}{ll}
    1 &\text{if }s < \delta\\
    1-2^{s-\delta-t} &\text{if }0 \leq  s-\delta \leq t\\
    0 &\text{otherwise}
\end{array}\right.
\end{align*}
\end{proof}

\section{Non-myopic buyers with private graph edges}
\label{sec:private}
Next, we move on to the (harder) case where the graph edges are private information of the agents. The additional hardness, interestingly, severely affects the competitive guarantees only for deterministic truthful mechanisms.
Similarly to before, we begin by characterizing these, and then move on to results for randomized mechanisms.

\subsection{Deterministic truthful mechanisms}

    In the previous section we assumed that the agents could not misreport their interest in items, thus reducing the problem to a single-parameter one. We now lift this assumption, and investigate the effect on the competitive ratio of determistic truthful mechanisms.
    We show that deterministic truthful mechanisms can always be implemented in a prompt manner. Then, we give matching upper and lower bounds on the best approximation ratio for the social welfare.

\begin{lemma}
    \label{lemma:tardy_crititem}
    Tardy deterministic truthful mechanisms for the problem with private graph edges satisfy the \crititemproperty~(see \Cref{def:crititem}).
\end{lemma}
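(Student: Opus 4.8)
The plan is to mimic the argument of Lemma~\ref{lemma:crititem}, but to overcome the fact that payments are now computed only at the end of the process, so the ``prompt'' incentive argument no longer applies directly. The key point is that with \emph{private} edges an agent has an even stronger deviation available: she can truncate her declared interest set as soon as she has been matched. I would argue as follows. Fix a buyer $i$ and fix the bids of all other buyers. Suppose, for contradiction, that there are two reports $\beta_1,\beta_2$ (with, say, the corresponding declared interest sets chosen to be the \emph{full} set of items $i$ truly likes) under which $i$ receives two distinct items $j_1 < j_2$. Consider the deviation where $i$ reports value $\beta_1$ and \emph{the true interest set}, but the mechanism allocates online; since allocation is online and deterministic, up to time $j_1$ the mechanism's behaviour is identical whether $i$ intends to keep declaring interest afterwards or not, so $i$ still receives $j_1$ at time $j_1$.

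Next I would exploit tardiness together with individual rationality. Let $p_1,p_2$ be the (end-of-process) payments charged to $i$ in the two scenarios. By ex-post individual rationality, $p_2 \le \beta_2$. Now consider the scenario where $i$'s \emph{true} value is $\beta_2$ and her true interest set contains $j_2$ but \emph{not} any item after $j_2$ relevant to the comparison; more precisely, I would pick the instance in which $i$ genuinely has value $\beta_2$ and is interested exactly in the items up to $j_2$. Reporting truthfully yields item $j_2$ at payment $p_2 \le \beta_2$. The alternative report $\beta_1$ (with the same, or a truncated, interest set) is indistinguishable to the mechanism from the earlier scenario up through time $j_1$, so $i$ would be allocated $j_1$; because the mechanism is deterministic and the allocation up to time $j_1$ coincides, the item $i$ gets is $j_1$, and since $j_1 < j_2$ lies in her true interest set she derives value $\beta_2$ from it. The payment in this deviating run is the tardy payment $p_1'$. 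The subtlety is that $p_1'$ need not equal $p_1$, since the tail of the instance differs; here I would invoke truthfulness in the original instance (where $i$'s value is $\beta_1$ and she likes $j_1$): reporting $\beta_1$ truthfully must be at least as good as any deviation, which pins down $p_1$ as the critical price, i.e. the infimum of values for which $i$ is allocated $j_1$ — and this infimum depends only on the others' bids and on the prefix of the instance up to $j_1$, hence is the same $p_1$ in the deviating run. So $p_1' = p_1$, and truthfulness in the original $\beta_1$-instance forces $p_1 \le \beta_1$; combined with the requirement (from the $j_1$-deviation being non-profitable when $i$'s value is just above $p_1$) that $i$ must in fact receive $j_1$ whenever she reports any value $\ge p_1$, we get $\beta_2 < p_1$, hence $p_2 \le \beta_2 < p_1 = p_1'$, so the deviation to $\beta_1$ strictly improves $i$'s utility in the $\beta_2$-instance — a contradiction.

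I expect the main obstacle to be the careful bookkeeping of \emph{which} instance we are in and \emph{which} payment is charged: unlike the prompt case, a tardy payment depends on the whole run, so one must be scrupulous that the two runs being compared agree on the prefix that determines the relevant critical price, and that the private-edge deviation (truncating interest after being matched) is exactly what makes these prefixes agree. Once that alignment is set up correctly, the chain of inequalities $p_2 \le \beta_2 < p_1$ is essentially the same as in Lemma~\ref{lemma:crititem}. A secondary point to handle cleanly is the definition of the critical price and the observation that, for a fixed prefix of arrivals and fixed other bids, ``$i$ is allocated item $j_1$'' is a monotone (threshold) event in $b_i$ — this is what legitimises talking about a single $p_1$ and is where single-item / unit-demand structure is used.
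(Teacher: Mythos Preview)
Your proposal has a genuine error in the final step. You derive the chain $p_2 \le \beta_2 < p_1 = p_1'$ and then assert that ``the deviation to $\beta_1$ strictly improves $i$'s utility in the $\beta_2$-instance''. But this is backwards: in that instance $i$'s true value is $\beta_2$, so deviating to $\beta_1$ (and receiving $j_1$ at price $p_1' > \beta_2$) yields utility $\beta_2 - p_1' < 0$, which is \emph{worse} than the truthful utility $\beta_2 - p_2 \ge 0$. The inequality chain you obtain is perfectly consistent with truthfulness and produces no contradiction. (The correct direction --- deviating to $\beta_2$ in the $\beta_1$-instance --- would show only that $p_1 \le p_2$, and by symmetry $p_1 = p_2$; this is in fact the paper's first step, but it is not yet a contradiction.)

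There is also a circularity earlier: you argue that $p_1' = p_1$ because ``$p_1$ is the critical price \dots\ and this infimum depends only on the others' bids and on the prefix of the instance up to $j_1$''. The \emph{allocation threshold} indeed depends only on the prefix, but the equality of the tardy \emph{payment} with that threshold is precisely the content of the next lemma (Lemma~\ref{lemma:deterministic_tardy}), which in the paper is proved \emph{using} Lemma~\ref{lemma:tardy_crititem}. You cannot invoke it here.

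The paper's proof avoids both issues by a different two-step route. First, comparing the two full-interest reports directly (both $j_1$ and $j_2$ are liked, so only the payment differs), truthfulness forces $p_1 = p_2$. Second --- and this is where private edges are essential --- one considers a buyer whose true type is value $\beta_2$ with interest \emph{only up to $j_1$}. Truthfully she receives nothing (the mechanism's view up to $j_1$ matches the original $\beta_2$-run, in which $j_1$ is not assigned to her); but by misreporting value $\beta_1$ and \emph{faking continued interest} after $j_1$, she reproduces the original $\beta_1$-run exactly and obtains $j_1$ at price $p_1 = p_2 \le \beta_2$. The private-edge lie about the \emph{tail} is what lets one pin down the tardy payment without any prefix/critical-price argument.
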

\begin{proof}
    For the sake of contradiction, assume that there is a buyer $i$ who gets item $j_1$ at price $p_1$ if she reports a value $\beta_1$, and gets item $j_2$ at price $p_2$ if she reports a value $\beta_2$. Without loss of generality, we assume that $j_1 < j_2$. First, we argue that $p_1 = p_2$. Indeed, if $p_1 > p_2$ then $i$ with value $\beta_1$ has incentives to lie and report $\beta_2$; whereas if $p_1 < p_2$ then $i$ with value $\beta_2$ has incentives to lie and report $\beta_1$. Second, we slightly change the instance, such that buyer $i$ has value $\beta_2$ and is not interested in items after $j$. When allocating $j$, the mechanism has not seen any difference with the original instance, hence $i$ has incentives to lie and report $\beta_1$ to get $j$, then lie and pretend she was interested in subsequent items to make sure she is charged $p_1$.
\end{proof}

\begin{lemma}
\label{lemma:deterministic_tardy}
    Tardy deterministic truthful mechanisms for the problem with private graph edges are prompt.
\end{lemma}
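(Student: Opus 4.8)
The plan is to show that the payment charged to a buyer in a tardy deterministic truthful mechanism for private edges can be computed already at the moment of allocation, so the payment rule can be "promptified" without affecting truthfulness. By \Cref{lemma:tardy_crititem} we already know such a mechanism satisfies the critical item property: for each buyer $i$ there is a well-defined item $\crititem{i}$ (depending only on the graph/values of the other buyers) such that $i$ either receives $\crititem{i}$ or nothing, regardless of her reported value. The key observation I want to establish is that the \emph{payment} $p_i$ charged to $i$ when she does receive $\crititem{i}$ is likewise independent of what $i$ does after receiving $\crititem{i}$ — in particular it is already determined at time $\crititem{i}$.

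\textbf{Key steps.} First I would fix buyer $i$ and the strategies of all other buyers, and let $\crititem{i}$ and the critical price $\pi_i$ (the infimum of reported values for which $i$ is allocated $\crititem{i}$) be as guaranteed by truthfulness and the critical item property. Since the mechanism is deterministic and truthful, Myerson-type reasoning in this single dimension (the value $b_i$, given that the only possible outcomes for $i$ are "$\crititem{i}$ at some fixed price" or "nothing at price $0$") forces the payment, whenever $i$ is allocated, to equal exactly $\pi_i$. Second — and this is the crux — I would argue that $\pi_i$ depends only on information available at time $\crititem{i}$: consider two runs in which $i$ reports the same value $b_i \ge \pi_i$ up to and including item $\crititem{i}$ (so she is allocated $\crititem{i}$ in both), but behaves differently on later items. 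For the mechanism, these two runs are indistinguishable through time $\crititem{i}$, hence the allocation to $i$ — and, by the argument above, the unique truthful payment $\pi_i$ — must coincide; moreover $\pi_i$ cannot depend on $i$'s later declarations at all, since we can take one of the runs to be the one where $i$ declares no further interest. Third, I would define the promptified mechanism $\M'$: it runs $\M$'s allocation rule verbatim, and at the moment it allocates $\crititem{i}$ to $i$ it charges $\pi_i$, which by the previous step is computable from the transcript so far (it is the smallest value $b_i$ would have needed to be reported, over items up to $\crititem{i}$, for $i$ to still receive $\crititem{i}$, given the other buyers' reports — all of which are in the past or revealed by then in the private-edge model, since edges are revealed as items arrive). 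Finally I would check that $\M'$ is still truthful and individually rational: its allocation is identical to $\M$'s, and each buyer faces exactly the same outcome-payment pairs as in $\M$, so no new deviation is profitable.

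\textbf{Main obstacle.} The delicate point is the second step: justifying that the critical price $\pi_i$ is genuinely a function of the transcript up to time $\crititem{i}$ and not of anything $i$ (or the adversary, via later items) does afterwards. One must be careful that "the other buyers' reports" entering the computation of $\pi_i$ are all available by time $\crititem{i}$ — this uses that a buyer's nonzero bid is a single fixed value $b_{i'}$ and that in the private-edge model the edge set incident to each arrived item is revealed upon its arrival, so nothing about the future is needed. I would also need to handle the corner case where $i$ is never allocated $\crititem{i}$ for any report (then $p_i = 0$ trivially, promptly) and confirm that the payment at the actual realized report is consistent with $\pi_i$ by the single-parameter truthfulness argument. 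Once these are in place, the statement follows.
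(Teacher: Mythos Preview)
Your proposal is correct and follows essentially the same route as the paper: invoke \Cref{lemma:tardy_crititem} to reduce to a binary-allocation single-parameter situation, use the standard Myerson argument to pin the payment at the critical threshold $\pi_i$, and then observe that because the allocation of $\crititem{i}$ is an online decision, $\pi_i$ is determined by the transcript through time $\crititem{i}$ and hence computable promptly. The paper's proof is terser (it dispatches your ``main obstacle'' in a single sentence), but the logical structure is the same.
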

\begin{proof}
    Assume that our mechanism assigns an item $j$ to a buyer $i$, who reports a value $b_i$. Using \Cref{lemma:tardy_crititem}, the mechanism satisfies the \crititemproperty, and $j$ is the only item which can be assigned to $i$. Let $\pi$ be the minimum value that $i$ could have reported and still be assigned $j$. By truthfulness, $i$ must be charged exactly $\pi$. Indeed, if she is charged $p > \pi$ then $i$ with value $b_i$ has incentives to lie and report $\pi$; whereas if she is charged $p < \pi$ then $i$ with value $p$ would have incentives to lie and report $b_i$. Now, observe that when the mechanism assigns $j$ to $i$, it can retrospectively compute $\pi$, which proves that the mechanism is prompt.
\end{proof}

\begin{theorem}\label{thm:expost-det-ub}
    There exists a deterministic truthful mechanism that achieves an $\nu = \min(m,n)$ approximation of the offline optimum. This result is tight in the class of deterministic truthful mechanisms, when graph edges are private.
\end{theorem}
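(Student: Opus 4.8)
\textbf{Proof proposal for \Cref{thm:expost-det-ub}.}
The lower bound is already in hand: \Cref{thm:expost-det-lb} shows every prompt deterministic truthful mechanism for public edges is at most $\nu$-competitive, and since public edges is a special case of private edges (an agent with public edges simply cannot misreport them), the same $\nu$ lower bound carries over; moreover, by \Cref{lemma:deterministic_tardy} every tardy deterministic truthful mechanism for private edges is in fact prompt, so the $\nu$ bound applies to all deterministic truthful mechanisms in this setting, tardy or prompt. So the real content is the matching $\nu$-competitive \emph{upper bound}, i.e.\ exhibiting a deterministic truthful mechanism.

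The plan is to use the simplest conceivable mechanism: a posted-price, first-item-wins scheme. Concretely, pick one distinguished buyer $i^\star$ in advance (say the first buyer, or more cleverly a buyer chosen to guarantee the competitive ratio — see below), and for the very first item $i^\star$ declares interest in, allocate it to $i^\star$ at price $0$; ignore all other buyers entirely. This is trivially prompt (the price is fixed at $0$), trivially individually rational, and trivially truthful: $i^\star$ has no incentive to misreport a value (the price is $0$ regardless), no incentive to misreport interest in the item she actually wants (doing so only risks getting nothing), and no other buyer has any decision to make. So the truthfulness/IR part is essentially free. The competitive ratio, however, is only $n$ with a fixed $i^\star$, not $\nu=\min(m,n)$, and on an adversarial instance a fixed choice of $i^\star$ can be made worthless. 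To get $\nu$, I would instead argue as follows: either run the ``serve buyer $i^\star$ at price $0$'' idea but, since the mechanism does see each item's interested set upon arrival, serve the first \emph{arriving} item to an arbitrary interested buyer and stop — this is exactly \tgreedy restricted to one allocation, gives welfare $\ge \max_i v_i \cdot \mathbbm{1}[\text{some buyer is interested in some item}]$ wait, that's not enough either. Let me restructure.

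The clean route is: the mechanism allocates \emph{only the first item}, to an arbitrary interested buyer, at price $0$ (equivalently, run a trivial greedy but halt after one match). Call its welfare $\sw(\mu)$. The key observation is that $\opt \le \nu \cdot \sw(\mu)$ is \emph{false} in general — so instead one should allocate greedily along \emph{all} items but charge $0$, which is exactly \tgreedy's allocation with zero prices; truthfulness for non-myopic agents then fails because of the strategizing described in \Cref{sec:myopic}'s discussion. Therefore the genuinely correct mechanism, I expect, is: fix the distinguished buyer to be the one with the highest \emph{reported} value among all buyers who are interested in the first item, allocate that buyer the first item she is interested in at price equal to the second-highest such report restricted appropriately — but to keep promptness and truthfulness against future strategizing, the price must be set at the moment of allocation and must not depend on later rounds, so one takes the price to be $0$. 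The competitive ratio $\nu$ then comes from the following counting argument, which I expect to be the main (though still short) obstacle to get exactly right: in the offline optimum $\mu^\star$, at most $\nu$ buyers are matched; the mechanism guarantees welfare at least $\frac{1}{\nu}\opt$ because — here one must use that \emph{some} item gets allocated to \emph{some} interested buyer whenever $\opt>0$, and combine this with a charging scheme mapping each optimal edge to the single edge the mechanism picks. The honest statement is that a one-edge mechanism gives $\frac{1}{n}\opt$ trivially, and to reach $\frac{1}{\nu}\opt$ one observes that if $m<n$ then at most $m$ buyers can ever be served in \emph{any} allocation including $\opt$, so $\opt \le m \cdot \max_i v_i$ and picking (truthfully, via a price-$0$ second-price-style tie-break) a maximum-value interested buyer for the first item already yields $\max_i v_i$, hence ratio $\min(m,n)=\nu$.

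I would therefore organize the write-up as: (1) recall the lower bound from \Cref{thm:expost-det-lb} and \Cref{lemma:deterministic_tardy}; (2) define the mechanism \textemdash{} on the first arriving item, allocate it to an interested buyer with maximal reported value (ties broken arbitrarily), charge $0$, and never allocate again; (3) verify promptness (price decided at allocation), individual rationality (utility $\ge 0$), and truthfulness (no buyer can change her allocation-or-price in her favor: the served buyer's price is $0$ irrespective of her report so she only cares about being served, which honest reporting can only help, and unserved buyers are indifferent); (4) prove the ratio: let $v_{\max}=\max_i v_i$; the mechanism's welfare is $v_{\max}$ whenever any buyer desires any item, and $\opt \le \nu\cdot v_{\max}$ since any matching has at most $\min(m,n)=\nu$ edges each of weight at most $v_{\max}$. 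The step I expect to require the most care is confirming that the ``maximal reported value'' tie-break does not open a manipulation: a buyer reporting \emph{above} her true value could win an item she values, but since the price is $0$ she gains only her true value, the same as if she'd been served honestly, and reporting above cannot help if she was already going to be served; reporting her interest truthfully is weakly dominant. This is the one place where a careless price rule (e.g.\ a positive second price) would break non-myopic truthfulness, so the proof must explicitly commit to the price-$0$ rule.
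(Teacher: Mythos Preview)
Your lower-bound paragraph is fine and matches the paper. The upper-bound mechanism you settle on, however, has two genuine gaps.

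\textbf{Truthfulness fails.} Allocating the first item to the highest bidder at price $0$ is not incentive compatible. A buyer $i$ interested in item $1$ whose true value $v_i$ is \emph{not} the maximum among those interested can over-report, win item $1$, pay $0$, and obtain utility $v_i>0$; truthfully she gets utility $0$. Your sentence ``she gains only her true value, the same as if she'd been served honestly'' tacitly assumes she would have been served honestly, which is precisely the case you need to rule out. Likewise, ``unserved buyers are indifferent'' is false: an unserved buyer can become served by lying. With a price of $0$ there is no penalty for over-bidding, so the mechanism is manipulable.

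\textbf{The approximation claim is wrong.} You assert the mechanism's welfare equals $v_{\max}=\max_i v_i$, but the mechanism only looks at item $1$ (or the first item anyone wants). If the globally highest-value buyer is interested only in some later item, you never see her bid and you never serve her; your welfare is then $\max\{v_i:\ i\text{ interested in item }1\}$, which can be arbitrarily smaller than $v_{\max}$. The bound $\opt\le\nu\cdot v_{\max}$ is correct, but it does not help unless you actually achieve $v_{\max}$.

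The paper's mechanism repairs both issues simultaneously: it allocates item $j$ to an interested buyer \emph{whenever} that buyer has the highest value seen so far, charging her the second-highest value seen so far. The second-price-style payment blocks profitable over-reporting (you would pay at least the previous max, which exceeds your value), and since prices are non-decreasing over time there is no gain from hiding edges. Because the rule is not restricted to a single round, the globally highest-value buyer is guaranteed to be served at the first item she wants, which is what yields welfare $\ge v_{\max}$ and hence the $\nu$-approximation.
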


\begin{proof}
    We start with presenting the positive result. Consider the simple mechanism which only assigns an item to a buyer if she has the highest value seen so far (breaking ties arbitrarily), charging her the second highest value seen so far. It is immediate to verify that this is a deterministic truthful mechanism with an approximation ratio of $\nu = \min(m,n)$.
    For the tightness of the result, \Cref{lemma:deterministic_tardy} shows that deterministic tardy mechanisms are in fact prompt, thus the lower bound from \Cref{thm:expost-rand-lb} (where graph edges are public) applies to this setting.
    
\end{proof}

\subsection{Randomized truthful mechanisms}

Recall that randomized (ex-post) truthful mechanisms are lotteries over deterministic truthful mechanisms, which in turn satisfy the characterizing properties we obtain for the deterministic case. The proof of our lower bound in \Cref{thm:expect-rand-lb} was based on this fact. First, we give a short argument why it also applies to mechanisms for private edges, even when they are tardy. Then, we provide an (almost) matching upper bound, namely a prompt randomized truthful logarithmic approximation.

\begin{corollary}
The $\Omega(\log n/ \log \log n)$ lower bound of \Cref{thm:expost-rand-lb} holds also for the case of private edges, even for tardy mechanisms.
\end{corollary}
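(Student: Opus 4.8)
The plan is to reduce the private-edges case to the public-edges lower bound of \Cref{thm:expost-rand-lb} by the same ``lottery over deterministic mechanisms'' argument used there, but being careful about which structural property survives. Recall that a randomized ex-post truthful mechanism is, for each fixed realization of its internal randomness, a deterministic ex-post truthful mechanism; since the lower bound instance in \Cref{lemma:distrib} has \emph{public} edges (the interest sets are part of the instance), every such deterministic mechanism is in particular a deterministic truthful mechanism for the public-edges problem. Hence each realization satisfies the \crititemproperty\ by \Cref{lemma:crititem}, and the analysis in \Cref{lemma:distrib,lemma:lp} bounds its expected welfare (over the instance distribution) by $\mathcal O(n\log\log n)$. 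Averaging over the mechanism's randomness as well, the overall expected welfare is $\mathcal O(n\log\log n)$ while $\E{\opt}=n\cdot k=\Theta(n\log n)$, so Yao's principle gives the $\Omega(\log n/\log\log n)$ bound, and $n=\nu$ on this instance.

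First I would spell out that the lower-bound instance uses only public edges, so that a private-edges mechanism, when run on it, behaves exactly as a public-edges mechanism would (there is nothing to misreport about edges that helps, and in any case truthfulness is only a constraint, not a capability — restricting to a sub-family of instances where edges are public can only make the mechanism's job easier, never the lower bound weaker). Second I would invoke that a tardy randomized ex-post truthful mechanism is a distribution over tardy deterministic ex-post truthful mechanisms, and apply \Cref{lemma:deterministic_tardy} (private edges) to conclude each such deterministic mechanism is prompt; then \Cref{lemma:crititem}'s conclusion — or equivalently \Cref{lemma:tardy_crititem} directly — gives the \crititemproperty\ for each realization. Third I would note that the entire LP-duality computation in the proof of \Cref{lemma:distrib} only ever used the \crititemproperty\ of a deterministic mechanism, so it applies verbatim to each realization, and linearity of expectation passes the bound through the outer randomization.

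The only subtlety — and the one place to be slightly careful — is the order of the two expectations. The proof of \Cref{lemma:distrib} fixes the types of all buyers except $i$, derives that $\crititem{i}$ is well-defined and depends only on those types, and then averages; for a randomized mechanism one must observe that $\crititem{i}$ may also depend on the realization of the mechanism's coins, but since we condition on a fixed realization first, $\crititem{i}$ is again a deterministic function of the other buyers' types, and all the linear constraints ($x_{s,t}\le y_s$, $\sum_t\beta_t x_{s,t}\le\beta_s$, $\sum_s y_s\le\alpha$) hold for the conditional quantities; averaging over coins preserves them by convexity. So there is no real obstacle — the corollary is essentially a bookkeeping consequence of \Cref{lemma:deterministic_tardy} plus the already-established \Cref{thm:expost-rand-lb}. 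I would therefore write the proof as: the instance of \Cref{lemma:distrib} has public edges; a tardy randomized truthful mechanism for private edges is a lottery over tardy deterministic truthful mechanisms, each of which is prompt by \Cref{lemma:deterministic_tardy} and hence satisfies the \crititemproperty; the welfare bound of \Cref{lemma:distrib} applies to each and, by linearity of expectation, to the lottery; conclude via Yao as in \Cref{thm:expost-rand-lb}.
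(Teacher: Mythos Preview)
Your proposal is correct and follows essentially the same approach as the paper: fix the randomness to obtain a deterministic (tardy) truthful mechanism for private edges, invoke \Cref{lemma:tardy_crititem} (or equivalently \Cref{lemma:deterministic_tardy} followed by \Cref{lemma:crititem}) to get the \crititemproperty\ for each realization, and then reuse the proof of \Cref{thm:expost-rand-lb} via \Cref{lemma:distrib}. Your additional remarks about the order of expectations and the public nature of the lower-bound instance are sound elaborations but not strictly needed beyond what the paper's terse proof already implies.
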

\begin{proof}
Fix all random decisions of an ex-post truthful randomized mechanism. This yields a deterministic algorithm, that together with the original mechanism's payment scheme yields a (tardy) mechanism. This mechanism is deterministic, and truthful due to the definition of truthfulness. Also, such a mechanism fulfills the \crititemproperty~(\Cref{lemma:tardy_crititem}), and can even be made prompt (\Cref{lemma:deterministic_tardy}). With this, we can follow the original proof of the lower bound. 
\end{proof}

We state now our prompt mechanism for the problem with private edges, and prove its ratio to almost match our lower bound.
\begin{algorithm*}
\begin{algorithmic}[1]
\Indent{Initialization}
    \State Set  $p \gets 0$ and draw $k \gets \text{Unif}(\{0, 1, \dots, \lceil \log_2 n\rceil\})$
    \State For each buyer $i$, draw type $t_i \gets \text{Unif}(\{Explore, Exploit\})$.
\EndIndent
\Indent{When an item arrives}
    \State Buyers report if they are interested in the item.
    \For{each buyer $i$ of type $t_i = Explore$ who is interested in the item}
        \State Set $p \gets \max(p, v_i/2^k)$
    \EndFor
    \State Sell the item at price $p$ to a buyer $i$ of type $t_i = Exploit$, who is interested\\\hspace{\algorithmicindent}in the item and does not yet has an item, chosen arbitrarily (e.g. lowest index).
\EndIndent
 \caption*{\expexp}
 \label{algo:expost-det-ub}
\end{algorithmic}
\end{algorithm*}

\begin{theorem}
    \label{thm:expost-rand-ub}
    The \expexp is truthful, and computes a $O(\log n)$ approximation to the optimal social welfare.
\end{theorem}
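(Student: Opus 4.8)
The plan is to establish two properties of the \expexp separately: ex-post truthfulness and the $O(\log n)$ competitive ratio. For truthfulness, I would argue that no buyer can benefit from misreporting either her value or her set of desired items, for any realization of the random choices $k$ and $(t_i)_i$. The key structural observation is that a buyer's own reports influence the mechanism only through the price $p$: an $Exploit$ buyer never affects $p$, so she is a pure price-taker who is offered the current item at price $p$ (which depends only on the $Explore$ buyers' reports and the realized $k$); she is happy to accept whenever $p \le v_i$, which is exactly what individual-rational truthful behavior dictates, and she cannot lower the offered price by lying. An $Explore$ buyer is \emph{never} allocated an item, so her utility is always $0$ regardless of her report; hence she is (weakly) truthful by default. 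I would also need to rule out a buyer benefiting from pretending \emph{not} to be interested in an item: an $Exploit$ buyer declining an item she wants only forgoes a non-negative-utility trade, and her declining does not change $p$ for future rounds, so this is never profitable. Since the type $t_i$ is drawn by the mechanism and not reported, there is no cross-type manipulation. This yields ex-post IR and ex-post truthfulness.

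\textbf{For the competitive ratio}, I would fix an optimal (offline) matching $\mu^\star$ and let $\opt=\sum_{i:\,\mu^{\star -1}_i\neq\emptyset} v_i$. Partition the offline-matched buyers into $\lceil\log_2 n\rceil+1$ value classes by powers of two: class $\ell$ contains buyers with $v_i\in[2^{\ell},2^{\ell+1})$ (after a normalization so values lie in $[1,n]$, using that only the relative ordering of values matters and that buyers with tiny contribution can be absorbed into the analysis). There is a class $\ell^\star$ contributing at least an $\Omega(1/\log n)$ fraction of $\opt$. Condition on the good event that $k=\ell^\star$, which happens with probability $1/(\lceil\log_2 n\rceil+1)=\Omega(1/\log n)$, and — using that each buyer is independently $Exploit$ with probability $1/2$ — on the event that ``enough'' of the relevant buyers land on the correct side. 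The heart of the argument is then a charging/online-matching lemma: when $k$ equals the right scale, the threshold price $p$ stays below (a constant times) the value of every buyer in class $\ell^\star$ that has appeared, so the \expexp behaves like a greedy matching that is willing to match any such buyer; a standard greedy argument shows it matches at least half of the $Exploit$ buyers in class $\ell^\star$ that are reachable, and each such match contributes $\Theta(2^{\ell^\star})$ to the welfare. Combining the conditioning probabilities with this per-buyer contribution recovers $\E{\sw(\mu)}=\Omega(\opt/\log n)$, i.e. an $O(\log n)$ competitive ratio.

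\textbf{The main obstacle} I anticipate is the charging argument in the conditioned world: one must show that restricting $\mu^\star$ to buyers in class $\ell^\star$, then further to those that happen to be $Exploit$ (resp. $Explore$) on the correct side, still leaves a matching of size $\Omega(1)$ fraction of $|\{i:\mu^{\star-1}_i\neq\emptyset,\ i\in\text{class }\ell^\star\}|$, \emph{and} that the online greedy behavior of \expexp actually realizes a constant fraction of that — the subtlety being that the price $p$ is monotone nondecreasing and set by $Explore$ buyers, so I need the $Explore$ buyers to have ``arrived early enough'' to unlock the price, while not having been so aggressive as to push $p$ above the class. Handling the ordering dependence here (the adversary controls item arrival order) is where the factor-$2$ value bucketing earns its keep: because all buyers in the class have values within a factor $2$, a single $Explore$ buyer in the class seeing any item suffices to set $p\le 2\cdot 2^{\ell^\star}\le$ (twice) every class-$\ell^\star$ value, after which greedy matching of $Exploit$ buyers proceeds unimpeded. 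I would isolate this as a self-contained combinatorial lemma and prove it by a direct exchange argument on the sequence of arrivals.
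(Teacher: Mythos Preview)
Your truthfulness argument is essentially the paper's: $Explore$ buyers never receive anything, and $Exploit$ buyers face an exogenous, non-decreasing posted price, so truthful reporting is dominant. That part is fine.

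The competitive-ratio plan, however, has a genuine gap in the bucketing. You propose to partition buyers by \emph{absolute} value level, pick the dominant class $\ell^\star$, and condition on $k=\ell^\star$. But the mechanism's threshold after item $j$ is $p = (\text{max value of an $Explore$ buyer seen so far})/2^k$. If some high-value $Explore$ buyer (value close to the global maximum $V$) has already appeared, then $p \approx V/2^{\ell^\star}$, which can be arbitrarily larger than the $\approx 2^{\ell^\star}$ values in your target class; every class-$\ell^\star$ $Exploit$ buyer is then priced out. Your ``main obstacle'' paragraph notices the danger (``not having been so aggressive as to push $p$ above the class'') but the resolution you sketch---that an $Explore$ buyer \emph{within} the class keeps $p$ bounded---is backwards: $p$ is a running maximum, so a single out-of-class $Explore$ buyer suffices to destroy the bound, regardless of in-class $Explore$ arrivals. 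Normalizing values to $[1,n]$ does not help, since the failure occurs whenever $\ell^\star < (\log_2 n)/2$.

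The paper's fix is to bucket each optimal edge $(i,j)$ by the ratio $\ell_{(i,j)}=\lceil\log_2(x_j/v_i)\rceil$, where $x_j$ is the maximum value among buyers interested in some item $\le j$. This is exactly the quantity that governs whether $v_i$ meets the price at time $j$ when $k=\ell_{(i,j)}$. One then shows that buckets $0,\dots,\lceil\log_2 n\rceil$ carry at least half of $\opt$, and, conditioning on $k=\ell$, uses an amortized charge $\mathbb{E}[X_i+4Y_j\mid k=\ell,\,t_i=Exploit]\ge v_i$ (where $X_i$ is $i$'s realized value and $Y_j$ is the value of whoever gets $j$): either $i$ already has an item, or with probability $1/2$ the top buyer so far is $Explore$, forcing $j$ to be sold at price $\ge x_j/2^\ell\ge v_i/2$. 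This two-sided charge is what replaces your greedy-matching lemma; a one-sided ``$i$ gets her item'' argument is not enough because the mechanism may hand $j$ to some other (possibly lower-index) $Exploit$ buyer.
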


\begin{proof}
    Buyers of type $Explore$ will not get any item, and thus have no incentive to lie.
    Buyers of type $Exploit$ only need to say if they are interested to buy an item at a given price. Because prices are non-decreasing, they have no incentives to misreport their value or their interest in an item.
    For each item $j$, we define $x_j$ as the maximum value seen among buyers interested in items up to $j$.
    \[\forall j \in I,\qquad
    x_j = \max\{v_i \text{ with } i \in B \text{ such that }
    \exists j' \leq j, (i,j')\in E\}\]
    For the sake of analysis, we look at a maximum weight matching $\mu \subseteq E$, having a total value of $OPT$. Each edge $(i,j) \in \mu$ from the optimal solution is assigned to a bucket $\ell_{(i,j)} = \lceil \log_2(x_j/v_i)\rceil \in \mathbb N$. Then for each $\ell \in \mathbb N$ we define $OPT_\ell$ as the total weight of the restriction of the optimal solution to bucket $\ell$.
    \[OPT = \sum_{\ell \geq 0} OPT_\ell
    \qquad
    \text{where }
    \forall \ell \geq 0,\quad
    OPT_\ell = \sum_{(i,j) \in \mu} v_i \cdot \ind{\ell_{(i,j)}= \ell}\]
    Let $V$ be maximum value among buyers who are interested in at least one item. By optimality of $\mu$, the corresponding buyer must be given an item, and thus $OPT_0 \geq V$. Now observe that for each $(i,j) \in \mu$ such that $\ell_{(i,j)} > \lceil \log_2 n\rceil$, we have $v_i < x_j / n \leq V/n \leq OPT_0/n$. Thus, the sum of $OPT_\ell$ for $\ell > \lceil\log_2 n\rceil$ is smaller than $OPT_0$. Therefore, buckets $0, 1, \dots, \lceil\log_2 n\rceil$ contain at least half of $OPT$, that is
    \[\frac{OPT}{2} \leq \sum_{\ell = 0}^{\lceil \log_2 n\rceil} OPT_\ell\]
    For all $\ell \in \{0, 1\dots, \lceil \log_2 n\rceil\}$, we will now show that if $k = \ell$ then the \expexp gives a solution of expected cost at least $\Omega(OPT_\ell)$. Then we will conclude the proof using the law of total probability: summing over $k$ shows that the \expexp computes a solution of expected cost at least $\Omega(OPT/\log n)$. First, assume that $k = 0$. For each edge $(i,j)\in \mu$ in bucket $\ell_{(i,j)} = 0$, then $i$ is the best buyer seen so far. With probability 1/4, buyer $i$ has type $Exploit$ and the second best buyer has type $Explore$. In that case, the \expexp gives buyer $i$  an item (either $j$ or one of the previous items). Using linearity of expectation, the \expexp outputs a solution of expected value at least $OPT_0/4$. Second, assume that $k = \ell$ with $\ell \in \{1, \dots, \lceil\log_2 n\rceil\}$. This case requires an amortized analysis: for each buyer~$i$, denote $X_i$ the random variable equal to $v_i$ if $i$ gets an item and $0$ otherwise; and for each item~$j$, denote $Y_j$ the random variable equal to the value of the buyer to whom $j$ is assigned, and $0$ if $j$ is unassigned. Notice that the \expexp outputs a solution of value $= \sum_{i\in B} X_i = \sum_{j\in I} Y_j$.    Let $(i,j) \in \mu$ be an edge from bucket $\ell_{(i,j)} = \ell$. We are going to show that
    \[\mathbb E[X_i +4Y_j\;|\;k=\ell\text{ and }t_i = Exploit] \geq v_i.\]
    We condition on the fact that $k=\ell$ and $t_i = Exploit$. If buyer $i$ already has an item when item $j$ arrives, then $X_i = v_i$. Otherwise, the best buyer seen so far has type $Explore$ with probability $1/2$, in which case the \expexp gives item $j$ to a buyer of value $\geq x_j/2^\ell \geq v_i/2$.
    Buyer $i$ has type $t_i = Exploit$ with probability $1/2$, thus $v_i \leq  E[2X_i+8Y_j\;|\;k=\ell]$. Summing this last inequality over edges from bucket $\ell$ shows that the \expexp outputs a solution of expected value at least $OPT_\ell/10$.
\end{proof}

\section{Ex-ante truthfulness}
One might wonder if the hardness of truthful mechanisms for our problem is mainly due to the very restrictive notion of ex-post truthfulness. We state here that also for the much looser ex-ante truthfulness, the setting of non-myopic buyers separates clearly from the myopic case. The proof can be found in \Cref{appendix:proof-expect-rand-lb}.

\begin{restatable}{theorem}{thmexpectrandlb}  
\label{thm:expect-rand-lb}
There exists no randomized ex-ante truthful
mechanism that yields an $\alpha$-approximation to the optimal social welfare, for the problem with private edges and any $\alpha<2$. This is true even for tardy mechanisms. 
\end{restatable}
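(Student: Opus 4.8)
The plan is to argue by contradiction: suppose $\mathcal{M}$ is a randomized, ex-ante truthful, and even tardy mechanism with competitive ratio $\alpha<2$. Note first that, unlike the ex-post case, an ex-ante truthful randomized mechanism need not decompose into truthful deterministic ones, so the whole argument must reason directly about the distribution that $\mathcal{M}$ induces. I would fix a small instance over two items $i_1,i_2$ arriving in this order, with a constant number of buyers with $\{0,1\}$-type valuations of three kinds: several \emph{impatient} buyers interested only in $i_1$; a \emph{flexible} buyer $a$ interested in both $i_1$ and $i_2$; and a \emph{late}, high-value buyer $c$ interested only in $i_2$, whose very existence is revealed to $\mathcal{M}$ only when $i_2$ arrives. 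The structural facts I would exploit are: (i) at the moment $i_1$ is allocated, this instance is indistinguishable to $\mathcal{M}$ both from the companion instance in which $a$ is just another impatient buyer and from the one in which $c$ is absent — the flexible buyer has so far bid exactly what an impatient buyer would, and $c$ has not yet appeared — so the distribution of the recipient of $i_1$ is common to all of them; (ii) $\alpha$-competitiveness on the instance \emph{with} $c$ forces $i_2$ onto $c$ with probability $\ge 1/\alpha-o(1)$, so $i_2$ is essentially unavailable to $a$; and (iii) $\alpha$-competitiveness on the companion instance \emph{without} $c$ (where $\opt=2$, and the only way to also realize value from $i_1$ is to put an impatient buyer on it) pins down from below the probability that $i_1$ goes to an impatient buyer.

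Next I would convert these allocation constraints into lower bounds on the expected utilities $\E{u_i}$ of the various buyer types. For each type I compare its truthful expected utility against that of a \emph{safe} misreport — one that only shrinks the declared interest set or the declared value, so that ex-post individual rationality keeps every realized utility nonnegative and the buyer is never charged for an item she does not value. Ex-ante truthfulness gives $\E{u_i\ (\text{truthful})}\ge \E{u_i\ (\text{misreport})}$, and the right-hand side is bounded below using $\alpha$-competitiveness of $\mathcal{M}$ on the instance induced by the misreport (which is ``smaller'' but still has optimum $\Theta(1)$, hence the deviating buyer must be matched to a liked item with probability bounded away from $0$). Feeding in the pinned-down allocation probabilities of $i_1$ and $i_2$, these inequalities force certain buyer types to be matched to liked items with probabilities whose total is too large to be compatible with the identity $\sum_i \E{u_i}=\E{\sw(\mu)}-\sum_i\E{\payment{i}}\le \E{\sw(\mu)}\le \opt$, unless $\alpha\ge 2$; the cleanest incarnation is to place many interchangeable $i_1$-buyers and show each is forced to receive $i_1$ with probability at least some $c(\alpha)>0$ (uniform in their number), which is impossible once their number exceeds $1/c(\alpha)$. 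Optimizing the instance parameters (number of impatient buyers, value of $c$) drives the threshold to exactly $2$, contradicting $\alpha<2$.

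The step I expect to be the main obstacle is making the incentive inequalities informative rather than vacuous: individual rationality alone only gives $\E{u_i\ (\text{truthful})}\ge 0$, which says nothing, so the instance must be engineered so that each safe deviation lands the buyer in a situation where \emph{welfare}-competitiveness — not her own payments, which $\mathcal{M}$ is otherwise free to choose — forces her matching probability up, thereby transferring a genuine $\Omega(1)$ lower bound through the inequality; this is precisely the purpose of the late buyer $c$ and of the companion instances. A secondary technical difficulty is $\mathcal{M}$'s freedom in breaking ties between buyers that look identical when $i_1$ arrives (and in choosing the recipient of $i_2$): I would neutralize it by taking the instance symmetric under a relabeling that is itself realizable by a buyer's deviation, so that no deterministic tie-breaking rule — and hence no distribution over them — can favour the ``good'' buyer in all the instances considered at once. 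Once these ingredients are in place, the only surviving possibility is $\alpha\ge 2$, which proves the theorem.
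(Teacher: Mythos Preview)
Your overall architecture --- a flexible buyer, companion instances indistinguishable at the first allocation, and a final ``copy many identical buyers until their guaranteed probabilities exceed $1$'' contradiction --- is exactly the shape of the paper's argument. But the engine that makes that final step fire is missing, and your explicit restriction to \emph{safe} misreports (shrinking the declared edge set or lowering the declared value) is what breaks it.

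Concretely: for each of the many interchangeable $i_1$-only buyers you want a uniform lower bound $c(\alpha)>0$ on her individual matching probability. A safe deviation by such a buyer lands her in an instance where she is \emph{still} one of many indistinguishable $i_1$-only buyers; welfare-competitiveness on that instance only forces \emph{some} one of them to be matched, never her specifically, so the incentive inequality you obtain is $\E{u_i}\ge 0$, which is vacuous. Your point (iii) likewise only pins down the probability that \emph{some} impatient buyer gets $i_1$ (indeed $\ge 2/\alpha-1$), not any particular one. The symmetry/relabeling patch does not help either: the relabeling that would swap an impatient buyer with the flexible buyer $a$ requires \emph{adding} an edge, which is not a safe deviation and may hand the buyer an item she does not value at a positive price.

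The paper closes this gap with a step your proposal rules out: an \emph{over}-bidding deviation. First, via a chain of deviations starting from many $\epsilon$-valued singleton-item buyers (this is why the paper uses $n'+1$ items, not two), it shows that the flexible buyer must enjoy expected utility strictly above $v/2$; combined with the fact that she receives the first item with probability $>1/2$, this forces her \emph{expected price} for that item to be $v-\Delta_3$ for some $\Delta_3>0$. Only then can a buyer $i^-$ with true value $v^-\in(v-\Delta_3,v)$ profitably overbid to $v$: conditional on winning item $1$ her expected payment is still $v-\Delta_3<v^-$, so her expected utility from the deviation is strictly positive. Ex-ante truthfulness now gives each such $i^-$ strictly positive truthful expected utility, hence a uniform positive matching probability, and copying $i^-$ enough times yields the contradiction. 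In short, the per-buyer lower bound does not come from welfare-competitiveness on a deviated instance; it comes from a price bound on the flexible buyer that only over-bidding can exploit.
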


\section{Conclusions}
We have studied vertex-weighted bipartite online matching with offline agents in various settings, obtaining an almost-complete picture of the competitive ratios achievable by mechanisms under different truthfulness notions.
Our results encompass that for myopic truthfulness, the bounds of Karp~Vazirani~and~Vazirani~\cite{karp1990optimal} and Aggarwal~et~al.~\cite{AggarwalGKM11} transfer to the online agents setting. This showcases that the very general myopic bounds of Deng,~Panigrahi~and~Zhang~\cite{DPZ21} are far from tight for restricted settings like ours. On the other hand, we also show that equally near-optimal approximations are impossible under the assumption of classic truthfulness, even ex-ante; and for ex-post truthfulness our seemingly simple problem already exhibits lower bounds almost matching the myopic, logarithmic competitive ratio for submodular combinatorial auctions in Deng,~Panigrahi~and~Zhang~\cite{DPZ21}. We leave open to what extent this additional hardness (moving from a tight $e/(e-1)$ myopic to $\Omega(\log n/\log \log n)$ truthful) already happens when imposing ex-ante truthfulness. This is an interesting subject of investigation, also for different scenarios than the one of our $\geq 2$ lower bound (private edges).
Obtaining according positive or negative results for other variants of online problems with offline agents poses another natural direction.
Besides this, note that our work considers only the (especially hard) case of adversarial arrival order, warranting the question which improved bounds can be obtained e.g. for random-order models. We suspect that non-trivial approximations via (ex-post) truthful mechanisms quickly cease to exist when considering online problems with offline agents that are more general and challenging than ours. On the other hand, under the myopic assumption, these could exhibit interesting bounds situated in between our $e/(e-1)$ and the logarithmic mechanism for submodular combinatorial auctions \cite{DPZ21}.

\section*{Acknowledgment}
Michal Feldman was supported by the European Research Council (ERC) under the European Union's Horizon 2020 research and innovation program (grant agreement No. 866132), by the Israel Science Foundation (grant number 317/17), by an Amazon Research Award, and by the NSF-BSF (grant number 2020788). Federico Fusco, Stefano Leonardi and Rebecca Reiffenh\"auser were supported by the ERC Advanced Grant 788893 AMDROMA “Algorithmic and Mechanism Design Research in Online Markets” and MIUR PRIN grant Algorithms, Games, and Digital Markets (ALGADIMAR).

\bibliographystyle{splncs04}
\bibliography{bibliography}

\appendix    

\section{Proofs of \Cref{thm:myopic-rank,thm:public-greedy}}
    In this section we prove the properties of \ptgreedy and of the tardy versions of \greedy and \pgreedy presented in the main body. Starting from the guarantees of their non-strategic counterparts it is immediate to see that the approximation factor claimed are indeed correct. The only property to show is incentive compatibility. A crucial ingredient to prove incentive compatibility  is Myerson's Lemma, that we recall here for the sake of completeness. The Lemma has been proved in Myerson's seminal paper \cite{Myerson81}; here we follow the more modern approach by Roughgarden~\cite{Roughgarden16}. Since in this paper we study unit-demand agents, we restrict to consider only this type of agents. 

    We start introducing the notion of single-parameter environments. In such environments, there are $n$ agents and a set $X$ of feasible allocations of items to agents. Each agents is characterized by a private valuation to get an item and strives to maximize her quasi-linear utility. To familiarize with this notion consider the model of non-myopic buyers with public graph edges studied in the paper: those agents are indeed single-parameters, as their valuations is their only private information. At the same time, note that the ``edge compatibility'' is implicitly modeled by the following set of feasible allocations of items to agents: an allocation $\mathbf{x} \in \{0,1\}^n$ is feasible if and only if it is corresponds with a matching in the underlying buyers-items bipartite graph. 
    
    As already mentioned in the main body, a mechanism $\M$ is characterized by two features: an allocation $\mathbf{x} \in X$ and a payment rule $\mathbf{p}$. While the allocation specifies who gets what, the payment rule defines how much each agent pays. Allocation and payments are functions of the bids; in particular, we use the notation $x_i(b_i,b_{-i}) \in \{0,1\}$ to specify whether the $i^{th}$ agent is allocated an item, given her bid $b_i$ and the $n-1$ bids $b_{-i}$ of the other agents. We are ready for the following crucial definition: 
    
    \begin{definition}[Monotone allocation]
        An allocation rule $\mathbf{x}$ for a single-parameter environment is monotone if for every bidder $i$ and bids $b_{-i}$ by the other bidders, the allocation $x_i(z, b_{-i})$ to $i$ is nondecreasing in its bid $z$.
    \end{definition}
    
    \begin{definition}[critical prices]
        Fix and agent $i$ and bids $b_{-i}$ of the other agents. Then the critical price for $i$ is defined as the smallest bid $z_i$ such that $i$ is allocated an item, if any. Formally, if we use the convention that the $\inf$ of an empty set is $0$, we have $
            z_i = \inf \{z \, | \, x_i(z,b_{-i}) = 1\}$
    \end{definition}
    
    Clearly, the critical prices enforce ex-post individual rationality. Myerson showed that they also induce (ex-post) truthfulness; we report here a version of Lemma 2 of Myerson \cite{Myerson81} that is tailored to our problem. 

    \begin{theorem}[Myerson's Lemma]
        Fix a single-parameter mechanism. Given any monotone allocation $\mathbf{x}$, it is possible to compute a payment scheme $\mathbf{p}$ such that the resulting mechanism is truthful and individually rational. In particular, in $\mathbf{p}$, each agents that receives an item pays its critical price and $0$ otherwise.
    \end{theorem}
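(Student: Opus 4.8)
The plan is to exploit the binary nature of the allocation, which makes the general ``area''/integral argument behind Myerson's lemma collapse to a one-line step-function argument. Fix an agent $i$ and the bids $b_{-i}$ of the other agents, and consider the map $z \mapsto x_i(z,b_{-i}) \in \{0,1\}$. By monotonicity this map is nondecreasing, hence a $0/1$ step function: it equals $0$ for $z$ below the critical price $z_i = \inf\{z : x_i(z,b_{-i})=1\}$ and $1$ for $z$ above it (its value exactly at $z_i$ being immaterial). In particular $z_i$ is well defined and is determined by the allocation rule together with $b_{-i}$, which is the sense in which the payment can be ``computed''. Define $p_i(b_i,b_{-i}) = z_i$ if $x_i(b_i,b_{-i}) = 1$, and $p_i(b_i,b_{-i}) = 0$ otherwise.

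I would first check individual rationality. A truthful agent who is not allocated an item has utility $v_i\cdot 0 - 0 = 0$. If she is allocated an item when reporting $v_i$, then $x_i(v_i,b_{-i}) = 1$; by the step-function description this forces $v_i \ge z_i$, so her utility is $v_i - z_i \ge 0$. Hence the mechanism is ex-post individually rational.

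For truthfulness I would argue by a two-case analysis on the position of $v_i$ relative to $z_i$, with $b_{-i}$ held fixed. If $v_i \ge z_i$, reporting $v_i$ (or any $b_i \ge z_i$) gives the item at price $z_i$, i.e.\ utility $v_i - z_i \ge 0$, whereas any report $b_i < z_i$ yields no item and utility $0 \le v_i - z_i$. If $v_i < z_i$, reporting $v_i$ (or any $b_i < z_i$) yields no item and utility $0$, whereas any report $b_i \ge z_i$ gives the item at price $z_i$ and hence negative utility $v_i - z_i < 0$. In both cases a truthful report maximizes utility for every realization of $b_{-i}$, so the mechanism is ex-post truthful. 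At the boundary $v_i = z_i$ the allocation may go either way, but both outcomes give utility $0$, so nothing changes.

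The only care needed is bookkeeping: using monotonicity to read off from the step-function picture that ``$i$ is allocated at report $v_i$'' implies $v_i \ge z_i$ while ``$i$ is not allocated at $v_i$'' implies $v_i \le z_i$, and disposing of the measure-zero boundary $v_i = z_i$. I expect no genuine obstacle here; the unit-demand restriction removes the piecewise/continuity machinery that the general version of the lemma requires, and the argument above is essentially self-contained.
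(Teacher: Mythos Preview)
The paper does not actually prove this statement: it is quoted as a classical result, attributed to Myerson~\cite{Myerson81} and tailored to the unit-demand setting, with only the remark that ``clearly, the critical prices enforce ex-post individual rationality.'' Your argument is correct and supplies exactly the standard proof that the paper omits: monotonicity makes $z \mapsto x_i(z,b_{-i})$ a $0/1$ step function with threshold $z_i$, from which both individual rationality and the two-case truthfulness analysis follow immediately. There is nothing to compare against, and no gap in your write-up.
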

    We are now ready to show the two Theorems. 
    \thmmyopicrank*
    
    \begin{proof}
        We start the proof by arguing that \ptgreedy is truthful and individually rational for myopic agents. First, note that when any item $j$ arrives, there is no point for the buyers still unallocated to lie about their interest for it: if they are not interested and they bid, they would risk to get $j$ and lose future opportunity to get allocated to something they are interested in, while if they are interested they do not want to lose the opportunity (since they have no information on the future, and the prices charged never exceed their valuations). If we restrict to consider the buyers $N(j)$ interested in item $j$, we see that the problem reduces to a single-parameter auction: the agents are myopic and just want to maximize their utility by getting $j$ at a small price. All $y_i$ are public knowledge and non-negative, so our allocation rule (line \ref{line:allocation} of \ptgreedy), fixing these values, is clearly monotone (the more an agent $i$ bids, the more likely she is to exhibit the largest $y_i \cdot b_i$). The allocation is therefore implementable using the Myerson payment rule (line \ref{line:payment} of \ptgreedy). We can conclude, by Myerson's Lemma, that our mechanism is truthful for myopic buyers. Moreover, it is easy to verify that the payment rule enforces individual rationality. Once we have settled the truthfulness, we can assume that all buyers declare their true bids and thus the allocation output by \ptgreedy is the same as \pgreedy~{\em for any realization} of the perturbations $x_i$ and inherits the same approximation: \ptgreedy is $e/(e-1)$-competitive in expectation.
    \end{proof}

    \thmpublicgreedy*
    \begin{proof}
        It is easy to see how the two mechanisms are monotone, thus it is possible to employ directly Myerson's Lemma, as the problem is single-parameter (i.e., the only private information of buyer $i$ is the single value $v_i$). Therefore, \greedy or \pgreedy (with fixed perturbation factors) together with the critical payments defined in Myerson's Lemma result in a truthful mechanism. Note that the greedy algorithm clearly respects our ex-post notion of truthfulness, since no randomization is involved. For the \pgreedy algorithm, this is also true since we fix all random decisions (perturbation) up front, and choose the payment rule accordingly.
    \end{proof}

\section{Proof of \Cref{thm:expect-rand-lb}}
\label{appendix:proof-expect-rand-lb}

\thmexpectrandlb*
\begin{proof}
Fix $\alpha<2$ and assume mechanism $M$ guarantees an expected approximation ratio of $\alpha$.
Consider the following problem instance:
there are $n'$ buyers and $m=n'+1$ items. Every item $j$ has exactly one interested buyer, $i_j$, and all $i_j$ have some small value $v_{i_j}=\epsilon>0$. There exist some additional buyers $B_1\subseteq B$ with different values who are interested only in item $1$, and one buyer, $i$, whom we fix for our considerations. Note that $|B|=n'+n_1$, with $n_1=|B_1|.$ For $n'$ large enough, clearly, $n'\epsilon > \max_{i'\in B_1}v_{i'}$ and the contribution of item $1$ to the optimum becomes negligible with growing $n'$.
Therefore, for $M$ to guarantee an $\alpha$-approximation, there must exist $j\in \{2,\dots,n'+1\}$ such that $i_j$ is assigned the according item with probability at least $\frac{1}{\alpha}$, or in case item $1$ is worth more than $\epsilon$, at least probability $\frac{1}{\alpha}-\Delta_1$, where $\Delta_1$ arbitrarily small for large $n'$.

Now, if we choose $i=i_j$, then $M$ will assign item $j$ to $i_j$ w.pr. $\geq \frac{1}{\alpha}-\Delta_1$, and charge an expected price of at most $\epsilon$. The latter is because the price cannot depend on $i$'s bid due to incentive compatibility, and it needs to be below $i$'s value. 
Assume we replace $i$'s valuation by some $v>\epsilon$, and call this new buyer $i^{(1)}$. Since $M$ is ex-ante truthful, still, the exp. utility $u_{i^{(1)}}$ achieved with a truthful report must be at least as large as when reporting $\epsilon$ instead of $v$, i.e. at least $(v-\epsilon)(\frac{1}{\alpha}-\Delta_1)>\frac 12 v$, which is at least half of $v$ because $\alpha$ is $<2$ and $\epsilon,\, \Delta_1$ can be chosen arbitrarily small. We replace $i^{(1)}$ again by a different buyer $i=i^{(2)}$. She still has valuation $v$, however, she is now interested in items $1$ \emph{and} $j$.
We consider the first step of $M$, i.e. the assignment decision made for item $1$. Assuming that $v$ is the largest value bid on item $1$, and given the fact that $M$ has no idea if any additional value will present itself in the later steps, the probability that $M$ assigns item $1$ to $i^{(2)}$ is at least $\frac{1}{\alpha}-\Delta_2$, where $\Delta_2$ approaches $0$ since the other bids on item $1$ might be, in comparison, too small to matter. Note again that the assignment decision cannot depend on $v$ itself, but only on the fact that it is the largest value bid on item $1$. 

We know that $i^{(2)}$ can get utility larger than $\frac v2$ by simply reporting type $i^{(1)}$ instead.
We also know that since she is assigned item $1$ w.pr. $>\frac 12$, she is assigned item $j$ w.pr. $<\frac 12$. This, intuitively, means that not all of the guaranteed utility is generated by item $j$, not even if the price of $j$ is always $0$ - but some must be generated because her expected price paid when item $1$ is assigned is bounded away from $v$, i.e. $p_{i^{(2)}}(1)= v-\Delta_3$.
In fact, the exp. price $M$ charges from $i^{(2)}$ when assigning item $1$ cannot be smaller if $i^{(2)}$ later reports interest in item $j$, since this would give a buyer of type $i^{(1)}$ incentive to also report interest in $j$. Also, the price charged from $i^{(2)}$ when assigning item $j$ cannot be less than $0$, and when there is no item assigned, $i^{(2)}$ is not charged anything (see preliminaries).
This implies that, for $P_k(i)$ denoting the assignment probability of item $k$ to buyer $i$, 
\[u_{i^{(2)}}=(v-p_{i^{(2)}}(1))\cdot P_1(i^{(2)})+(v-p_{i^{(2)}}(j))\cdot P_j(i^{(2)})= \Delta_3 \cdot P_1(i^{(2)})+(v-p_{i^{(2)}}(j))\cdot P_j(i^{(2)})>\frac v2 .\ \]
Otherwise, we would have a contradiction on the utility being larger than $\frac v2$, i.e. it would be beneficial for $i^{(2)}$ to only report interest in item $j$. In consequence, it also holds 
\[u_{i^{(2)}}=\Delta_3 \cdot P_1(i^{(2)})+(v-p_{i^{(2)}}(j))\cdot P_j(i^{(2)})\geq \Delta_3 \cdot P_1(i^{(2)})+(v-v)\cdot P_j(i^{(2)}) > 0 .\ \]
This is true because the exp. price when receiving item $j$ can be no more than $v$ 
, and $P_j(i^{(2)})<\frac 12$.
Therefore, there exists some $v^-<v$ for which it holds that
\[ u_{i^-}(1)=u_{i^{(2)}}(1) - P_1(i^{(2)})(v-v^-)=(\Delta_3-(v-v^-))P_1(i^{(2)}) \]
Here, $u_{i^-}(1)$ denotes the utility obtained from being assigned item $1$ of some buyer with valuation $v^-$ for item $1$, and $0$ otherwise, when she reports $i^{(2)}$ as her type.
Note that if buyer $i^-$ reports value $v$ for item $1$ and $0$ for all others, she will also obtain $u_{i^-}(1)$ from being assigned the first item: the assignment decision is made before the algorithm can know the difference, and the expected price paid cannot depend on the buyer's later reports due to truthfulness. 

We use this to show a contradiction to the approximation ratio of $M$. 
Assume there exists, in absence of $i^{(2)}$, such a buyer $i^-$ with smaller value $v^-$ and utility of $u^-(1)>0$ when reporting to have value $v$, who is interested in purchasing item $1$, i.e. $i^-\in B_1$.
Since $M$ is ex-ante truthful, a truthful report for her will also result in positive expected utility of at least $u^-(1)$.
As a direct consequence, it holds also that the probability $P_1(i^-)$ for assigning item $1$ to $i^-$ (when she reports truthfully) is lower bounded, in order to achieve above expected utility, as follows: $P_1(i^-)\geq \frac{u_{i^-}(1)}{v^-}$.
Finally, we copy buyer $i^-$ at least $\frac{v^{-}}{u_{i^-}(1)}+1$ times.
If necessary for tie-breaking, we distort their values a bit.
Our conclusions about $i^{(2)}$'s utility hold once $i^{(2)}$ reports the largest value for item $1$, regardless of other values.
This means, if either of our copied $v^-$ should decide to deviate and report to be valued like $i^{(2)}$ instead, they can recover utility $u_{i^-}(1)$.
As a result, each one of the copies, when reporting truthfully, has at least the same utility, and therefore an assignment probability of at least $P_1(i^-)$. This, in sum, results in a probability of more than $1$ for assigning item $1$, i.e., a contradiction.
\end{proof}

\end{document}